\newtheorem{theorem}{Theorem}
\newtheorem{lem}{Lemma}
\newtheorem{cor}{Corollary}
\newtheorem{con}{Construction}
\newtheorem{definition}{Definition}
\newtheorem{remark}{Remark}
\newcommand{\C}{{\mathcal{C}}}
\begin{document}

\title{\bf Strong Singleton-Like Bounds, Quasi-Perfect Codes and Distance-Optimal Codes in the Sum-Rank Metric}
\author{Chao Liu, Hao Chen, Qinqin Ji, Ziyan Xie, Dabin Zheng and Yongbo Xia\thanks{Chao Liu and Dabin Zheng are with the Hubei Laboratory of Applied Mathematics, Faculty of Mathematics, Hubei University, Wuhan 430062, Hubei, China (e-mail: chliuu@163.com, dzheng@hubu.edu.cn). Hao Chen is with the College of Information Science and Technology, Jinan University, Guangzhou, Guangdong Province, 510632, China (e-mail: haochen@jnu.edu.cn). Qinqin Ji is with the School of Mathematics and Statistics, Hubei University of Education, Wuhan 430205, Hubei, China (e-mail: qqinji@163.com). Ziyan Xie is with the School of Mathematics, Nanjing University of Aeronautics and Astronautics, Nanjiang 211106, China (e-mail: xieziyan@nuaa.edu.cn). Yongbo Xia is with School of Mathematics and Statistics, South-Central Minzu University, Wuhan, 430074, Hubei, China (e-mail:xia@mail.scuec.edu.cn). This research was supported by NSFC Grant 62032009, NSFC Grant 62272148, NSFC Grant 62171479, and Natural Science Foundation of Hubei Province of China Grant 2023AFB847.}}

\date{}
\maketitle

\begin{abstract}
Codes in the sum-rank metric have received many attentions in recent years, since they have wide applications in the multishot network coding, the space-time coding and the distributed storage. In this paper, by constructing covering codes in the sum-rank metric from covering codes in the Hamming metric, we derive new upper bounds on sizes, the covering radii and the block length functions of codes in the sum-rank metric. As applications, we present  several strong Singleton-like bounds that are tighter than the classical Singleton-like bound when block lengths are large. In addition, we give the explicit constructions of the distance-optimal sum-rank  codes of matrix sizes $s\times s$ and $2\times 2$ with minimum sum-rank distance four respectively by using cyclic codes in the Hamming metric.  More importantly, we  present an infinite families of quasi-perfect $q$-ary sum-rank codes with matrix sizes $2\times m$.  Furthermore, we construct almost MSRD codes with larger block lengths and demonstrate how the Plotkin sum can be used to give more distance-optimal sum-rank codes.
\end{abstract}

\vskip 6pt
\noindent {\bf Intex terms:} Covering code in the sum-rank metric, Strong Singleton-like bound, Distance-optimal code, Quasi-perfect code in the sum-rank metric.
\vskip6pt

\vskip 30pt

\section{Introduction}

\subsection{Codes in the Hamming metric and the sum-rank metric}

In this subsection, we recall some basic concepts on error-correcting codes in the Hamming metric and the sum-rank metric. Let ${\mathbb F_q}$ be a finite field with $q$ elements, where $q$ is a prime power.  For a vector ${\bf a}=(a_1,a_2,\dots, a_n)\in {\mathbb F}_q^n$, its Hamming weight $wt_H({\bf a})$ is the cardinality of its support:
\[
supp({\bf a})=\{i\,:\, a_i\neq 0\}.
\]
The Hamming distance $d_H({\bf a,b})$ between $\bf a$ and $\bf b$ is defined as $wt_H({\bf a}-{\bf b})$. An $[n,k,d_H]_q$ linear code $\C$ over ${\mathbb F_q}$  is a $k$-dimensional subspace of ${\mathbb F_q^n}$ with minimum Hamming distance $d_H$, where
$$d_H={\rm min}\left\{d_H({\bf u},{\bf v})\, :\, {\rm for \ all}\ {\bf u}\neq {\bf v}\in \C\right\}.$$
The minimum distance is bounded by the  Singleton Bound, i.e., $d_H\leq n-k+1$, see \cite{Singleton1964}. A linear code $\C$ is called a maximum distance separable (MDS) code if and only if $d_H(\C)=n-k+1$. MDS codes play an important role in cryptography and coding theory and have attracted lots of attention, involving the construction and non-equivalence.  Reed-Solomon (RS) codes are a class of special MDS codes have been extensively studied, see \cite{Huffman2003,Reed1960}.

For a code $\C$ in the Hamming metric space ${\mathbb F}_q^n$, we define its covering radius by
\[
R_H(\C)=\underset{{\bf x}\in {\mathbb{F}_q^n}}{\rm max}\,  \underset{\bf c\in \C}{\rm min}\{d_H({\bf x},{\bf c})\}.
\]
Then the Hamming balls
\[
B({\bf c},R_H(\C))=\{{\bf x}\in {\mathbb{F}_q^n}\,:\, d_H({\bf x},{\bf c})\leq R_H(\C)\}
\]
centered at all codewords ${\bf c}\in \C$ with radius $R_H(\C)$ cover the whole space ${\mathbb{F}_q^n}$, moreover, this radius is the smallest possible radius. Then $\C$ is called a covering code with radius $R_H(\C)$. We refer to the book \cite{Cohen1997} on this classical topic in coding theory. If a code $\C$ satisfies $R_H(\C)=\left\lfloor \frac{d_H(\C)-1}{2}\right\rfloor$, then $\C$ is called the perfect code \cite{MacWilliams1977,Vanlint1999}. If $R_H(\C)=\left\lfloor \frac{d_H(\C)-1}{2}\right\rfloor+1$, then $\C$ is called the quasi-perfect code \cite{MacWilliams1977,Etzion2005}. It was proved that perfect codes have the same parameters as Hamming codes and Golay codes, see \cite{Cohen1997,Vanlint1999}. Quasi-perfect codes are  ideal candidates  with which there is no perfect code, see \cite{Etzion2005}.

For a $(n,M,d)_q$ codes, the Sphere packing bound \cite{MacWilliams1977} asserts that
\[
MV_H(q,\left\lfloor \frac{d-1}{2} \right\rfloor )\leq q^n,
\]
where $V_H(q,r)=\sum\limits_{i=0}^r\left(\begin{array}{c}
    n \\
    i
\end{array}\right)(q-1)^i$ denotes the volume of the ball with the radius $r$ in the Hamming metric space ${\mathbb{F}_q^n}$. If there is a $(n,M,d)_q$ code, and there exists no code $(n,M,d+1)_q$, then code $(n,M,d)_q$ is called distance-optimal.

Codes in the sum-rank metric are the generalizations  of codes in the Hamming metric and codes in the rank metric. They are widely applied in network coding, see \cite{Napp2018}, space-time coding, see \cite{Lu2005,Shehadeh2022}, and coding for distributed storage, see \cite{Cai2022}. For fundamental properties and constructions of sum-rank codes, we refer to \cite{Byrne2021,Byrne2022,Camp2022,MP24,Ott2021}. Now, we recall some basic concepts and results on sum-rank codes. Let ${\mathbb F}_q^{(n,m)}$ be the set of all $n \times m$ matrices over ${\mathbb F}_q$, this is a linear space over ${\mathbb{F}_q}$ of the dimension $nm$. Let $n_i\leq m_i$, $i=1,2,\dots,t$ be $2t$ positive integers with $m_1\geq m_2\geq \cdots \geq m_t$,
\[
{\mathbb{F}_q^{(n_1,m_1),(n_2,m_2),\dots,(n_t,m_t)}}={\mathbb{F}_q^{(n_1,m_1)}}\oplus {\mathbb{F}_q^{(n_2,m_2)}}\oplus \cdots \oplus {\mathbb{F}_q^{(n_t,m_t)}}
\]
be the set of all ${\bf x}=({\bf x}_1,{\bf x}_2,\dots,{\bf x}_t)$, where ${\bf x}_i \in {\mathbb{F}_q^{(n_i,m_i)}}$, $i=1,2,\dots,t$, this is a linear space over ${\mathbb{F}_q}$ of dimension $\sum\limits_{i=1}^t n_im_i$. Its sum-rank weight is defined by
$$wt_{sr}({\bf x})={\rm rank}({\bf x}_1)+{\rm rank}({\bf x}_2)+\cdots+{\rm rank}({\bf x}_t),$$ and the sum-rank distance $d_{sr}({\bf x},{\bf y})$ between ${\bf x},\ {\bf y}$ is defined as $wt_{sr}({\bf x}-{\bf y})$ for ${\bf x},\ {\bf y}\in {\mathbb{F}_q^{(n_1,m_1),\dots,(n_t,m_t)}}$.

\begin{definition}
    A $q$-ary sum-rank code $\C \subset {\mathbb{F}_q^{(n_1,m_1),\dots,(n_t,m_t)}} $ with block length $t$ and matrix sizes $n_i\times m_i,\ i=1,\dots, t$ is a subset of the space ${\mathbb{F}_q^{(n_1,m_1),\dots,(n_t,m_t)}}$, its minimum sum-rank distance is defined by
    \[
    d_{sr}(\C)={\rm min}\left\{d_{sr}({\bf x},{\bf y})\,:\, {\rm for\  all}\ {\bf x}\neq {\bf y}\in \C \right\}.
    \]
\end{definition}
It is clear that the sum-rank metric code $\C$ is a Hamming metric error-correcting code if $n_i=m_i=1,\ i=1,2,\dots,t$. The basic goal of coding in the finite metric space ${\mathbb{F}_q^{(n_1,m_1),\dots,(n_t,m_t)}}$ endowed with sum-rank metric is to construct sum-rank codes with large sizes and large sum-rank distances. For some basic upper bounds, we refer to \cite{Byrne2021}.

The Singleton-like bound for sum-rank codes was proved in \cite{Byrne2021,Martinez2018}. If minimum sum-rank distance $d_{sr}$ can be  written uniquely as the form $d_{sr}=\sum\limits_{i=1}^{j-1}n_i+\delta+1$, where $0\leq \delta\leq n_j-1$ for $1\leq j\leq t$. Then
\[
|\C|\leq q^{\sum\limits_{i=j}^tn_im_i-m_j\delta}.
\]
The sum-rank code attaining this bound is called the maximum sum-rank distance (MSRD) code, when $m_1=m_2=\cdots=m_t=m$, this bound can be written by
\[
|\C|\leq q^{m(N-d_{sr}+1)},
\]
where $N=\sum\limits_{i=1}^tn_i$. When $n_i=m_i=1, \ i=1,\dots,t$, it degenerates to the original Singleton bound for the Hamming metric
codes. Similar to the case in the Hamming metric, the difference $m(N-d_{sr}+1)-\log_q{|\C|}$ is called the Singleton defect of sum-rank code $\C$. A code with the Singleton defect at most two is called almost MSRD code.

We have the following definition of covering radius in the sum-rank metric.
\begin{definition}
    Let $\C\subset {\mathbb{F}_q^{(n_1,m_1),\dots,(n_t,m_t)}}$ be a code in the sum-rank metric, its covering radius $R_{sr}(\C)$ is the minimum radius such that the balls
    \[
    B({\bf c},R_{sr}(\C))=\{{\bf x}\in {\mathbb{F}_q^{(n_1,m_1),\dots,(n_t,m_t)}}\,:\, d_{sr}({\bf x},{\bf c})\leq R_{sr}(\C)\}
    \]
    centered at all codewords ${\bf c}\in \C$ with radius $R_{sr}(\C)$ cover the whole space ${\mathbb{F}_q^{(n_1,m_1),\dots,(n_t,m_t)}}$.
\end{definition}

Let $V_{sr}(q,r)$ be the volume of the ball $\{{\bf x}\in {\mathbb{F}_q^{(n_1,m_1),\dots,(n_t,m_t)}}\,:\, d_{sr}({\bf x},{\bf 0})\leq r\}$ in the sum-rank metric space ${\mathbb{F}_q^{(n_1,m_1),\dots,(n_t,m_t)}}$. For a sum-rank code $\C$ with $M$ codewords and minimum sum-rank distance $d_{sr}$, the Sphere packing bound in the sum-rank metric \cite{Byrne2021} asserts  that
\[
MV_{sr}(q,\left\lfloor \frac{d_{sr}-1}{2} \right\rfloor) \leq q^{\sum\limits_{i=1}^tn_im_i}.
\]
A sum-rank code in ${\mathbb{F}_q^{(n_1,m_1),\dots,(n_t,m_t)}}$ with $M$ codewords and minimum sum-rank distance $d_{sr}$ is called the distance-optimal if there exists no sum-rank code with $M$ codewords and minimum sum-rank distance $d_{sr}+1$. If this code $\C$ satisfies
\[
MV_{sr}(q,\left\lfloor \frac{d_{sr}-1}{2} \right\rfloor) = q^{\sum\limits_{i=1}^tn_im_i},
\]
then this code is called perfect sum-rank code, and the whole space is the disjoint union of balls with the radius $\left\lfloor \frac{d_{sr}-1}{2} \right\rfloor$ centered at all codewords of a perfect code $\C$, i.e., $R_{sr}(\C)=\left\lfloor \frac{d_{sr}-1}{2} \right\rfloor$. If $R_{sr}(\C)=\left\lfloor \frac{d_{sr}-1}{2} \right\rfloor+1$, then code $\C$ is called quasi-perfect. Quasi-perfect sum-rank codes are naturally extensions of quasi-perfect codes of the matrix size $1\times 1$ in the Hamming metric. Then how to construct quasi-perfect sum-rank codes with the matrix size $n_i\times m_i, \ i=1,\dots,t$, where both $n_i$ and $m_i$ are larger than 1, is an interesting and challenging problem.

\subsection{Related works}
Sum-rank codes defined in the finite metric space $\mathbb{F}_{q}^{(n_1,m_1),\ldots,(n_t,m_t)}$ are the generalization of both codes in the Hamming-metric ($n_i = m_i = 1$) and codes in the rank-metric codes ($t = 1$). Fundamental properties and bounds, including the Singleton-like bound for sum-rank codes, were established in \cite{Byrne2021,Martinez2018}. For parameters satisfying $ t \leq q-1 $ and $ N \leq (q-1)m $, explicit MSRD constructions known as linearized Reed-Solomon codes were given in \cite{Martinez2018,Neri2022}. Conversely, it was shown in \cite[Example VI.9]{Byrne2021} that MSRD codes may not exist for some minimum distances when $t = q$. Further constructions include  linear MSRD codes over smaller fields \cite{MP24} and  linear MSRD codes with arbitrary block lengths and various square matrix sizes over a fixed field in \cite{Chen2023}. One-weight codes in the sum-rank metric were constructed in \cite{Neri2023}. Anticode bounds for sum-rank codes, optimal codes and the generalized sum-rank weights were given in \cite{Byrne2022,Camp2022}. The Hartmann-Tzeng for cyclic-skew-cyclic sum-rank codes was given in \cite{Alfarano2022}. Eigenvalue bounds for sum-rank codes were developed in \cite{Abiad2023} and these bounds were used to prove the nonexistence of MSRD codes for certain parameters. Covering codes in the rank-metric were studied in \cite{Byrne2017}. For covering codes in the sum-rank metric , some upper and lower bounds on covering radius have been preliminarily studied in \cite{Ott2022}. However these bounds were often related to ball volumes and not given explicitly. The relation between covering codes in the sum-rank metric and sum-rank-$\rho$-saturating systems was studied in \cite{BBB}.\\

Perfect sum-rank codes with matrix size $1 \times n$ and minimum distance three were constructed in \cite{UMart2019}. Several infinite families of distance-optimal sum-rank codes with minimum distance four were given in \cite{yqchen}. The construction of quasi-perfect codes in the Hamming metric with various parameters has received many attentions. Many binary, ternary linear or nonlinear quasi-perfect codes and quasi-perfect linear codes over large fields were constructed and classified. In particular, short quasi-perfect codes over ${\mathbb{F}_q}$ with the minimum distance four were constructed in \cite{Giulietti2007}. It is direct to verify that quasi-perfect codes with even minimum distances are distance-optimal.

\subsection{Our contributions}

In this paper, we give a construction of covering codes in the sum-rank metric space ${\mathbb{F}_q^{(m,m),\dots,(m,m)}}$. Block length functions of covering codes in the sum-rank metric are introduced, as the generalization of the length functions introduced in \cite{Brualdi1989} for covering codes in the Hamming metric. Then we give a general upper bound on sizes of covering sum-rank codes of the matrix size $m\times m$ and an upper bound on block length functions.   As applications of these bounds, we construct some covering codes in the sum-rank metric from covering codes in the Hamming metric.  Then we give several strong Singleton-like bounds for the sum-rank codes, and our bound is much stronger than the Singleton-like bound under certain conditions.

We also construct infinitely many new families of distance-optimal $q$-ary cyclic codes with the minimum distance four. These families of distance-optimal  codes have new parameters, compared with previous distance-optimal codes constructions. Infinitely many families of distance-optimal $q$-ary cyclic sum-rank codes with minimum distance four are presented.  MSRD codes are sum-rank codes with the zero Singleton defect, these MSRD codes constructed in \cite{Neri2023,Martinez2018} have the block lengths at most $q-1$. Singleton defect two almost MSRD $q$-ary codes with the block length up to $q^2$ and the minimum sum-rank distance four are given. An infinite family of distance-optimal $q$-ary sum-rank codes with the block length $q^4-1$, the matrix size $2\times 2$ and the minimum sum-rank distance four is constructed. The Singleton defect of these distance-optimal sum-rank
codes is four. It is showed that Plotkin sums of sum-rank codes lead to more distance-optimal sum-rank codes. Quasi-perfect binary sum-rank codes of the matrix size $2\times 2$ and the minimum sum-rank distance four are constructed. Notice that these quasi-perfect sum-rank
codes are distance-optimal automatically.

The paper is organized as follows. In Section 2, a construction of covering codes in the sum-rank metric from several covering codes in the Hamming metric is given. In Section 3, strong Singleton-like bounds on sum-rank codes are proved. In Section 4, several infinite families of quasi-perfect codes in the sum-rank metric are constructed. In Section 5, distance-optimal and almost MSRD codes are presented. In Section 6, the Plotkin sum of sum-rank codes is introduced. Section 7 concludes the paper.

\section{A construction of covering codes in the sum-rank metric}

In this section, we give a constructions of covering codes in the sum-rank metric from codes in the Hamming metric. The matrix sizes are assumed to satisfy $n_i=m_i=m,\ i=1,\dots,t$. Take a fixed basis $\Omega$ of ${\mathbb{F}_{q^m}}$ over ${\mathbb{F}_q}$, for each $a \in {\mathbb{F}_{q^m}}$, the coordinate representation of $a$ with respect to this basis is ${\bf a}=(a_1,a_2,\dots,a_m),\ a_i\in {\mathbb{F}_q}$. We denote $M_j(a)$  the $m\times m$ matrix with the only nonzero $j$-th row ${\bf a}$. The key point of our construction of codes in the sum-rank metric is the identification
\[
{\mathbb{F}_q^{(m,m)}}=\{M_1(\alpha_1)+M_2(\alpha_2)+\cdots+M_m(\alpha_m)\,:\,\alpha_j\in {\mathbb{F}_{q^m}}\ {\rm for}\ 1\leq j\leq m \}.
\]
This is the isomorphism of the linear space over ${\mathbb{F}_q}$.
\begin{con}
Let $\C_1,\C_2,\dots,\C_m$ be $m$ general  Hamming metric codes in ${\mathbb{F}_{q^m}^t}$, for any codeword ${\bf c}_i\in \C_i,\ i=1,\dots,m$, we denote by ${\bf c}_{ij}$ the $j$-th position of codeword ${\bf c}_i$ for $1\leq j\leq t$. Then a sum-rank code can be given as follows:
\[
SR_{covering}(\C_1,\dots,\C_m)=\left\{{\bf c}=\left(\sum_{i=1}^mM_i({\bf c}_{i1}),\sum_{i=1}^mM_i({\bf c}_{i2}),\dots,\sum_{i=1}^mM_i({\bf c}_{it})\right)\,:\, {\bf c}_1\in \C_1,\dots,{\bf c}_m\in \C_m \right\}.
\]
It is easy to see that the size of this sum-rank code is $\prod\limits_{i=1}^m|\C_i|$.
\end{con}

In this section, we present some results concerning the upper bounds on the block length functions. We first give some fundamental results on the covering radius of codes.
\begin{theorem}
    Let $\C_1,\C_2,\dots,\C_m$ be $m$ general  Hamming metric codes in ${\mathbb{F}_{q^m}^t}$ with covering radius $R_1,R_2,\dots,R_m$ respectively. Then for any ${\bf x}\in {\mathbb{F}_q^{(m,m),\dots,(m,m)}}$, there is a codeword ${\bf c}\in SR_{covering}(\C_1,\\\C_2,\dots,\C_m)$ such that
    \[d_{sr}({\bf x},{\bf c})\leq R_1+R_2+\cdots+R_m. \]
\end{theorem}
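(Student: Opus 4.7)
The plan is to exploit the identification of the ambient space into $m$ independent ``row layers'' across the $t$ block positions, cover each row layer by the corresponding Hamming-metric code $\C_i$, and then bound the sum-rank weight of the residual error by the sum of the row-wise rank contributions.

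First, I would fix an arbitrary ${\bf x} = ({\bf x}_1, \dots, {\bf x}_t) \in {\mathbb F}_q^{(m,m),\dots,(m,m)}$ and invoke the isomorphism recalled just before the Construction to write each block as ${\bf x}_j = \sum_{i=1}^m M_i(\alpha_{ij})$ with $\alpha_{ij} \in {\mathbb F}_{q^m}$. For each fixed $i \in \{1,\dots,m\}$ this produces an auxiliary vector $\boldsymbol{\alpha}_i := (\alpha_{i1},\dots,\alpha_{it}) \in {\mathbb F}_{q^m}^t$. Since $\C_i$ has covering radius $R_i$ in ${\mathbb F}_{q^m}^t$, I can pick ${\bf c}_i \in \C_i$ with $d_H(\boldsymbol{\alpha}_i, {\bf c}_i) \leq R_i$, and then form the candidate covering codeword
\[
{\bf c} = \left(\sum_{i=1}^m M_i({\bf c}_{i1}),\dots,\sum_{i=1}^m M_i({\bf c}_{it})\right) \in SR_{covering}(\C_1,\dots,\C_m).
\]

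The remaining step, which I view as the only nontrivial point, is the rank estimate for the block-wise residuals. By the linearity of $M_i$, the $j$-th block of ${\bf x} - {\bf c}$ equals $\sum_{i=1}^m M_i(\alpha_{ij} - {\bf c}_{ij})$; this is an $m \times m$ matrix over ${\mathbb F}_q$ whose rows are the ${\mathbb F}_q$-coordinate vectors of the differences $\alpha_{ij} - {\bf c}_{ij}$ with respect to the fixed basis $\Omega$, and whose $i$-th row vanishes precisely when $\alpha_{ij} = {\bf c}_{ij}$. Consequently its rank is at most the number of indices $i$ at which $\alpha_{ij} \neq {\bf c}_{ij}$. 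Summing over the $t$ blocks and swapping the order of summation yields
\[
d_{sr}({\bf x},{\bf c}) = \sum_{j=1}^t {\rm rank}({\bf x}_j - {\bf c}_j) \leq \sum_{j=1}^t |\{i : \alpha_{ij} \neq {\bf c}_{ij}\}| = \sum_{i=1}^m d_H(\boldsymbol{\alpha}_i, {\bf c}_i) \leq \sum_{i=1}^m R_i,
\]
which is exactly the asserted inequality.

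I do not anticipate a real obstacle: once the identification ${\bf x}_j = \sum_i M_i(\alpha_{ij})$ is in place, the problem decouples cleanly into $m$ independent Hamming-metric covering problems over ${\mathbb F}_{q^m}$, each of which is absorbed by the hypothesis on $\C_i$. The sole nontrivial ingredient is the elementary row-support rank bound ``rank is at most the number of nonzero rows,'' applied to the auxiliary $m \times m$ matrices $\sum_i M_i(\alpha_{ij} - {\bf c}_{ij})$, and even this is immediate from the definition of $M_i(\cdot)$.
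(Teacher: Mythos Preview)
Your proposal is correct and follows essentially the same approach as the paper's own proof: decompose each block of ${\bf x}$ into its $m$ row layers via the identification ${\mathbb F}_q^{(m,m)} \cong \{\sum_i M_i(\alpha_i)\}$, cover each row layer separately using the covering radius of $\C_i$, bound the rank of each residual block by its number of nonzero rows, and then swap the order of summation to recover $\sum_i R_i$. The paper writes this last double-counting step with indicator functions $\sum_r \sum_i 1_{r\in S_i} = \sum_i |S_i|$ where $S_i = supp(\boldsymbol{\alpha}_i - {\bf c}_i)$, but this is exactly your identity $\sum_j |\{i : \alpha_{ij} \neq {\bf c}_{ij}\}| = \sum_i d_H(\boldsymbol{\alpha}_i,{\bf c}_i)$.
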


\begin{proof}
    Let ${\bf x}_1,{\bf x}_2,\dots,{\bf x}_m$ be any $m$ vectors in ${\mathbb{F}_{q^m}^t}$, then there exist ${\bf c}_1\in \C_1, {\bf c}_2\in \C_2,\dots,{\bf c}_m\in \C_m$ such that $wt_H({\bf x}_i- {\bf c}_i)\leq R_i,\ i=1,\dots,m$. Let
    \[S_i=supp({\bf x}_i-{\bf c}_i)=\{j\,:\, ({\bf x}_i-{\bf c}_i)_j\neq0\}.\]
    It is known that any ${\bf x}\in {\mathbb{F}_q^{(m,m),\dots,(m,m)}}$ can be expressed as
    \[
    {\bf x}=\left(\sum_{i=1}^mM_i({\bf x}_{i1}),\sum_{i=1}^mM_i({\bf x}_{i2}),\dots,\sum_{i=1}^mM_i({\bf x}_{it})\right),
    \]
    and set
    \[
    {\bf c}=\left(\sum_{i=1}^mM_i({\bf c}_{i1}),\sum_{i=1}^mM_i({\bf c}_{i2}),\dots,\sum_{i=1}^mM_i({\bf c}_{it})\right).
    \]
    For $1\leq r\leq t$, set $M_r=\sum\limits_{i=1}^m\left(M_i({\bf x}_{ir})-M_i({\bf c}_{ir})\right)$ and denote  the number of nonzero rows in matrix $M_r$  by $m_r$. Then
    \[wt_{sr}({\bf x}-{\bf c})=\sum_{r=1}^t{\rm rank}\left(\sum_{i=1}^m\left(M_i({\bf x}_{ir})-M_i({\bf c}_{ir})\right)\right)\leq \sum_{r=1}^tm_r=\sum_{r=1}^t\sum_{i=1}^m1_{r\in S_i}=\sum_{i=1}^m|S_i|\leq \sum_{i=1}^mR_i,
    \]
    where $1_{r\in S_i}=1$ if $r\in S_i$, otherwise, $1_{r\in S_i}=0$.
\end{proof}

In particular, if all codes in the Hamming metric are the same, we obtain the following corollary.
\begin{cor}\label{upper bound of radius of SR(C)}
     Let $\C$ be a code in ${\mathbb{F}_{q^m}^t}$ in the Hamming metric with covering radius $R$. Then the code $SR_{covering}(\C,\dots,\C)$ is a covering code in the sum-rank metric with covering radius at most $mR$.
\end{cor}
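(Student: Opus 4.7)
The plan is to derive Corollary \ref{upper bound of radius of SR(C)} as a direct specialization of the theorem immediately above it. The theorem already handles the general case in which the $m$ ambient Hamming-metric codes $\C_1,\dots,\C_m$ are arbitrary and have possibly different covering radii $R_1,\dots,R_m$. So the entire content of the corollary is the observation that the upper bound $R_1+R_2+\cdots+R_m$ collapses to $mR$ when all summands are equal.

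Concretely, I would specialize by setting $\C_1=\C_2=\cdots=\C_m=\C$, so that each $R_i$ equals $R$. Given an arbitrary ${\bf x}\in {\mathbb{F}_q^{(m,m),\dots,(m,m)}}$, I would write its canonical decomposition
\[
{\bf x}=\left(\sum_{i=1}^mM_i({\bf x}_{i1}),\sum_{i=1}^mM_i({\bf x}_{i2}),\dots,\sum_{i=1}^mM_i({\bf x}_{it})\right)
\]
using the isomorphism introduced in the construction, which identifies the $i$-th row of each matrix block with a vector ${\bf x}_i=({\bf x}_{i1},\dots,{\bf x}_{it})\in {\mathbb{F}_{q^m}^t}$. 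The theorem then guarantees a codeword ${\bf c}\in SR_{covering}(\C,\dots,\C)$ such that $d_{sr}({\bf x},{\bf c})\leq R_1+\cdots+R_m=mR$.

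Finally, since ${\bf x}$ was arbitrary in the ambient sum-rank space, the definition of covering radius gives $R_{sr}\bigl(SR_{covering}(\C,\dots,\C)\bigr)\leq mR$, which is exactly the statement of the corollary. I expect no real obstacle here: the argument is purely a bookkeeping specialization of the previous theorem, and no new ingredient (such as a volume estimate, a rank inequality beyond the one already used in the theorem, or any property specific to $\C$) is required.
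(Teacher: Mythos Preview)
Your proposal is correct and matches the paper's approach: the corollary is stated immediately after the theorem with no separate proof, so the intended argument is exactly the specialization $\C_1=\cdots=\C_m=\C$ that you describe.
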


We define the block length functions $\ell_{q,m}(r,R)$ for linear covering codes in the sum-rank metric space ${\mathbb{F}_q^{(m,m),\dots,(m,m)}}$ with $t$ blocks. It is the smallest block length $t$ such that  there exists  sum-rank covering code in ${\mathbb{F}_q^{(m,m),\dots,(m,m)}}$   with block length $\ell_{q,m}(r,R)$, the codimension $r$ and the radius at least $R$. Block length functions $\ell_{q,m}(r,R)$ are the generalization of length functions $\ell_{q}(r,R)$ in the Hamming metric, which denote the smallest length  such that there is  a linear code with length $\ell_q(r,R)$, codimension $r$ and covering radius at least $R$. Then we have the following results on the block length functions.

\begin{lem}\label{relation of length function in Hamming metric and sum-rank metric}
    Let $q$ be a prime power, $m$, $r$, $R$ be fixed positive integers satisfying $m^2|r$ and $m|R$. Then
    \[\ell_{q,m}(r,R)\leq \ell_{q^m}(\frac{r}{m^2},\frac{R}{m}).\]
\end{lem}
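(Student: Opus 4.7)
The plan is to apply Construction 1 with $m$ identical copies of an optimal Hamming covering code over $\mathbb{F}_{q^m}$, then verify the three invariants (block length, codimension, covering radius) that the definition of $\ell_{q,m}$ demands. The divisibility hypotheses $m^2\mid r$ and $m\mid R$ are exactly what is needed to make both $r/m^2$ and $R/m$ legitimate integer parameters for $\ell_{q^m}$.

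First I would pick an $\mathbb{F}_{q^m}$-linear Hamming covering code $\C\subseteq \mathbb{F}_{q^m}^{L}$ of length $L=\ell_{q^m}(r/m^2,R/m)$, codimension $r/m^2$ over $\mathbb{F}_{q^m}$, and covering radius at most $R/m$; such a code exists by definition of the length function. Then I would form the sum-rank code $SR_{covering}(\C,\dots,\C)$ using $m$ identical copies of $\C$; by Construction 1 this produces a code in $\mathbb{F}_q^{(m,m),\dots,(m,m)}$ with block length exactly $L$. Linearity over $\mathbb{F}_q$ is inherited because $\C$ is already $\mathbb{F}_q$-linear and the map $\mathbf{c}\mapsto\sum_{i=1}^m M_i(\mathbf{c})$ used in the construction is $\mathbb{F}_q$-linear in each coordinate.

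Next I would do the dimension count. Since $\C$ has $\mathbb{F}_{q^m}$-dimension $L-r/m^2$, its size is $q^{mL-r/m}$, and therefore
\[
|SR_{covering}(\C,\dots,\C)|=|\C|^m=q^{m^2L-r}.
\]
The ambient space $\mathbb{F}_q^{(m,m),\dots,(m,m)}$ with $L$ blocks has $\mathbb{F}_q$-dimension $m^2L$, so the codimension of the sum-rank code is precisely $r$. For the covering radius, Corollary \ref{upper bound of radius of SR(C)} applied to $\C$ (whose Hamming covering radius is at most $R/m$) yields a sum-rank covering radius of at most $m\cdot (R/m)=R$.

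Combining these three facts, $SR_{covering}(\C,\dots,\C)$ is a linear sum-rank covering code with block length $L$, codimension $r$, and covering radius at most $R$, so by the minimality in the definition of $\ell_{q,m}$ we conclude $\ell_{q,m}(r,R)\leq L=\ell_{q^m}(r/m^2,R/m)$. There is no real obstacle here; the only thing to be careful about is tracking the change of base field $\mathbb{F}_{q^m}\to\mathbb{F}_q$ consistently in both the codimension bookkeeping and the radius scaling, which is precisely where the factors $m^2$ and $m$ enter.
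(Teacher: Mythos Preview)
Your proof is correct and follows essentially the same approach as the paper's: take an optimal Hamming covering code $\C$ over $\mathbb{F}_{q^m}$ realizing $\ell_{q^m}(r/m^2,R/m)$, form $SR_{covering}(\C,\dots,\C)$, and verify block length, codimension (via the size computation $|\C|^m=q^{m^2L-r}$), and covering radius (via Corollary~\ref{upper bound of radius of SR(C)}). The paper's proof is slightly terser but the steps are identical.
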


\begin{proof}
    Let $\C$ be a covering code over ${\mathbb{F}_{q^m}}$ with length $\ell_{q^m}(\frac{r}{m^2},\frac{R}{m})$. Then $\C'=SR_{covering}(\C,\dots,\C)\subset {\mathbb{F}_q^{(m,m),\dots,(m,m)}}$  is a sum-rank code with block length $\ell_{q^m}(\frac{r}{m^2},\frac{R}{m})$, and we have
    \[{\rm dim}_{\mathbb{F}_{q^m}}(\C)=\ell_{q^m}(\frac{r}{m^2},\frac{R}{m})-\frac{r}{m^2},\ {\rm and } \ R_{sr}(SR_{covering}(\C,\dots,\C))\leq m\cdot \frac{R}{m}=R.\]
    Then the codimension of $\C'$ is
    \[
    {\rm Codim}_{\mathbb{F}_q}(SR_{covering}(\C,\dots,\C))=m^2\ell_{q^m}(\frac{r}{m^2},\frac{R}{m})-m{\rm dim }_{\mathbb{F}_q}(\C)=r.
    \]
    Since $\ell_{q,m}(r,R)$ is the smallest block length of sum-rank code with   codimension $r$ and radius  $R$, we have $\ell_{q,m}(r,R)\leq \ell_{q^m}(\frac{r}{m^2},\frac{R}{m})$.

\end{proof}

\begin{lem}\cite[Theorem 3.3]{Davydov2022}\label{upper bound of length function in the Hamming metric}
    Let $q$ be a prime power, $r$, $R$ be a positive integer. Then
    \[ \ell_{q}(r,R)\leq cq^{\frac{r-R}{R}}\cdot \sqrt[R]{\ln q}, \ {\rm  for}\ R\geq 3,\ r=tR+1,\ t\geq 1,  \]
    where $c$ is a universal constants independent of $q$ and $m$.
\end{lem}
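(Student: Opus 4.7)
The plan is to combine a short ``seed'' covering code with a length-amplifying recursive construction, following the style of Davydov's work on saturating sets and covering codes. First, I would establish a base case: there exists a linear covering code $C_0$ over $\mathbb{F}_q$ of codimension $R+1$, covering radius at most $R$, and length $n_0 = O\bigl(q^{1/R}(\ln q)^{1/R}\bigr)$. This seed can be produced by a random construction: pick a parity-check matrix $H \in \mathbb{F}_q^{(R+1)\times n_0}$ with i.i.d.\ uniform columns and bound the probability that some syndrome is missed by $q^{R+1}\exp\!\bigl(-V_H(n_0,R)\, q^{-R-1}\bigr)$, then invoke $V_H(n_0,R) \geq \binom{n_0}{R}(q-1)^R$. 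Choosing $n_0$ of the stated order forces this quantity strictly below $1$, so a suitable seed exists.

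Second, to reach codimension $r = tR+1$, I would iterate a surjective amalgamated-direct-sum (or ``$q^m$-concatenation'') step that takes a covering code of length $n$, codimension $r'$, and covering radius $R$, and returns one of length of order $qn$, codimension $r'+R$, and covering radius still $R$. After $t-1$ iterations starting from $C_0$, the final length is of order $q^{t-1}\,n_0 = O\bigl(q^{(t-1)+1/R}(\ln q)^{1/R}\bigr) = O\bigl(q^{(r-R)/R}(\ln q)^{1/R}\bigr)$, which is the claimed bound for a suitable universal constant $c$.

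The main obstacle is to verify that the lifting step raises the codimension by exactly $R$ per round while keeping the covering radius pinned at $R$; a naive direct sum would instead add covering radii and fail, and a purely probabilistic argument on random linear codes only delivers the weaker exponent $(r-R)/(R-1)$ rather than $(r-R)/R$. The correct recursion uses a mixed parity-check block in which the new $R$ rows are chosen so that their action on weight-$\leq R$ error patterns is compatible with the inner seed's syndromes, so the combined covering radius factors as the inner radius plus a controlled outer contribution of zero. The hypotheses $R \geq 3$ and the arithmetic form $r = tR+1$ enter precisely at this juncture: they guarantee that each inductive round produces a well-defined outer layer whose covering analysis reduces cleanly to that of the seed, and that the covering error accumulated across $t-1$ rounds does not spoil the final bound.
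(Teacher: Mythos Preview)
The paper does not prove this lemma at all: it is quoted verbatim as \cite[Theorem 3.3]{Davydov2022} and used as a black box in the subsequent Theorem~\ref{upper bound of length function in the sum-rank metric}. There is therefore nothing in the present paper to compare your proposal against; the authors simply import the bound from the literature and move on.

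Your sketch is broadly in the spirit of Davydov's actual argument (a short starting code combined with a recursive $q^m$-concatenation that increments the codimension by $R$ per step while holding the covering radius fixed), so as an outline of how the cited result is obtained it is not unreasonable. That said, a few details would need tightening if this were a genuine proof attempt: the random seed argument for codimension $R+1$ does not by itself give the constant independent of $q$ without a more careful volume estimate, and the claim that the lifting step contributes ``a controlled outer contribution of zero'' to the covering radius is exactly the delicate part of Davydov's construction and cannot be waved through. But for the purposes of this paper none of that matters: the lemma is a citation, not a result to be re-proved here.
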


\begin{theorem}\label{upper bound of length function in the sum-rank metric}
    Let $q$ be a prime power, $m$, $u$, $R$ be  positive integers satisfying $m^2|u$ and $m|R$. Then
    \[
    \ell_{q,m}(uR+m^2,R)\leq cq^{\frac{(u-m)R+m^2}{R}}\cdot (m\ln q)^{\frac{m}{R}},
    \]
    where $c$ is a universal constants independent of $q$ and $m$.
\end{theorem}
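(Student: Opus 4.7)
The plan is a direct chaining of the two preceding lemmas (Lemma 2 and Lemma 3 in the excerpt). Set $r=uR+m^{2}$. Because $m^{2}\mid u$ we have $m^{2}\mid uR$, and therefore $m^{2}\mid r$; the hypothesis $m\mid R$ is already given. Hence Lemma~2 applies and yields
\[
\ell_{q,m}(uR+m^{2},R)\;\leq\;\ell_{q^{m}}\!\left(\frac{uR+m^{2}}{m^{2}},\,\frac{R}{m}\right)\;=\;\ell_{q^{m}}\!\left(\tfrac{u}{m}\cdot\tfrac{R}{m}+1,\,\tfrac{R}{m}\right).
\]
This reduces the problem to an estimate on a length function over the extension field $\mathbb{F}_{q^{m}}$ in the Hamming metric.

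Next I would invoke Lemma~3 with the substitutions $q\leftarrow q^{m}$, $R\leftarrow R/m$, and $t\leftarrow u/m$. The prerequisites are exactly $R/m\geq 3$ and $u/m\geq 1$, both of which come for free from the divisibility/positivity assumptions on $u$, $R$, $m$ (and from the tacit standing assumption that the parameters are in the range where Davydov's bound is stated). Writing $r':=tR'+1$ with $R'=R/m$ and $t=u/m$, one gets
\[
\ell_{q^{m}}(r',R')\;\leq\;c\,(q^{m})^{\frac{r'-R'}{R'}}\cdot\sqrt[R']{\ln q^{m}}.
\]

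Finally, it remains to simplify the exponents. Since $\frac{r'-R'}{R'}=t-1+\frac{1}{R'}=\frac{u}{m}-1+\frac{m}{R}$, the factor $(q^{m})^{(r'-R')/R'}$ becomes $q^{u-m+m^{2}/R}=q^{((u-m)R+m^{2})/R}$. Similarly, $\sqrt[R']{\ln q^{m}}=(m\ln q)^{m/R}$. Substituting back gives exactly
\[
\ell_{q,m}(uR+m^{2},R)\;\leq\;c\,q^{\frac{(u-m)R+m^{2}}{R}}\cdot(m\ln q)^{\frac{m}{R}},
\]
with the same universal constant $c$ as in Lemma~3.

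There is no serious obstacle here; the argument is essentially algebraic bookkeeping. The only steps that require a moment's care are (i) checking the divisibility conditions $m^{2}\mid r$ needed to apply Lemma~2, which follows from $m^{2}\mid u$, and (ii) verifying that $r'$ can be written in the form $tR'+1$ with an integer $t\geq 1$, which follows from $m\mid u$ (a consequence of $m^{2}\mid u$). Once these are in place, the stated bound drops out of the exponent simplification.
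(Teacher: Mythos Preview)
Your proposal is correct and follows exactly the same route as the paper's own proof: apply Lemma~\ref{relation of length function in Hamming metric and sum-rank metric} to pass to $\ell_{q^{m}}$, then apply Lemma~\ref{upper bound of length function in the Hamming metric} with $q\leftarrow q^{m}$, $R\leftarrow R/m$, $t\leftarrow u/m$, and simplify the exponents. The paper compresses the computation into a single displayed line, whereas you spell out the divisibility checks and the exponent algebra, but the argument is identical.
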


\begin{proof}
    By Lemma \ref{relation of length function in Hamming metric and sum-rank metric} and Lemma \ref{upper bound of length function in the Hamming metric}, we have
    \[
    \ell_{q,m}(uR+m^2,R)\leq \ell_{q^m}(\frac{uR+m^2}{m^2},\frac{R}{m})\leq cq^{m \cdot \frac{\frac{uR+m^2}{m^2}-\frac{R}{m}}{\frac{R}{m}}}\cdot (m\ln q)^{\frac{m}{R}}=cq^{\frac{(u-m)R+m^2}{R}}\cdot (m\ln q)^{\frac{m}{R}}.
    \]
\end{proof}

Then we give the bounds on the sizes of sum-rank codes. Let $q$ be a fixed prime power, $m$, $t$ be  given positive integers, for a given positive integer $R<mt$, we denote by $K_{q,m}(t, R)$ and $K_{q^m}(t,R)$ the minimum size of sum-rank codes $\C\subset {\mathbb{F}_q^{(m,m),\dots,(m,m)}}$ with block length $t$, covering radius smaller than or equal to $R$ and the minimum size of   codes $\C\subset{\mathbb{F}_{q^m}^t}$ with covering radius smaller than or equal to $R$ respectively. Set
\[k_t(q,\rho,m)=\frac{\log_qK_{q,m}(t,\rho mt)}{m^2t}, \ k_t(q^m,\rho)=\frac{\log_{q^m}K_{q^m}(t,\rho t)}{t}.\]
The following bound is well known,
\[
1-H_{q^m}(\rho)\leq k_t(q^m, \rho)\leq 1-H_{q^m}(\rho)+O(\frac{\log t}{t}),
\]
where
\[H_{q^m}(\rho)=\rho \log_{q^m}(q^m-1)-\rho \log_{q^m}\rho-(1-\rho)\log_{q^m}(1-\rho)\]
is the $q$-ary entropy function, see \cite{Cohen1997}. Then we have following upper bound on the sizes of sum-rank codes.
\begin{cor}\label{size of sum-rank codes}
    With notations defined as above and $m|R$, we have
    \[K_{q,m}(t,R)\leq \left(K_{q^m}(t,\frac{R}{m})\right)^m,\ {\rm and }\ k_t(q,\rho, m)\leq 1-H_{q^m}(\rho)+O(\frac{\log t}{t}).\]
\end{cor}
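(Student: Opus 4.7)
The plan is to apply Construction 1 with an optimal-size Hamming covering code taken as the inner ingredient, invoke Corollary 1 to control the resulting sum-rank covering radius, and then translate the cardinality inequality into a rate inequality by changing logarithmic bases. The key observation is that Construction 1 multiplies sizes (it is a Cartesian-product-style construction, so the size of $SR_{covering}(\mathcal{C},\ldots,\mathcal{C})$ is $|\mathcal{C}|^m$), while Corollary 1 multiplies radii by exactly $m$. Thus if we feed in an inner code at radius $R/m$, we obtain a sum-rank covering code at radius $R$ for free.

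For the first inequality, I would fix a Hamming covering code $\mathcal{C}\subset \mathbb{F}_{q^m}^{t}$ of length $t$, covering radius at most $R/m$, and minimum possible cardinality $|\mathcal{C}|=K_{q^m}(t,R/m)$; such a code exists by the definition of $K_{q^m}$. The divisibility assumption $m\mid R$ is exactly what makes $R/m$ an admissible radius. Then the sum-rank code $\mathcal{C}'=SR_{covering}(\mathcal{C},\ldots,\mathcal{C})$ has block length $t$ over $\mathbb{F}_q^{(m,m),\ldots,(m,m)}$, size $|\mathcal{C}'|=|\mathcal{C}|^{m}=K_{q^m}(t,R/m)^{m}$, and by Corollary 1 has sum-rank covering radius at most $m\cdot(R/m)=R$. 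Minimality of $K_{q,m}(t,R)$ then forces $K_{q,m}(t,R)\leq K_{q^m}(t,R/m)^{m}$.

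For the asymptotic inequality, I would specialize $R=\rho m t$ (automatically divisible by $m$) and take $\log_q$ of both sides of the first bound, using the identity $\log_q x=m\log_{q^m}x$:
\[
\log_q K_{q,m}(t,\rho m t)\leq m\log_q K_{q^m}(t,\rho t)=m^{2}\log_{q^m}K_{q^m}(t,\rho t).
\]
Dividing by $m^{2}t$ and matching the definitions gives $k_t(q,\rho,m)\leq k_t(q^{m},\rho)$. Plugging this into the standard entropy bound $k_t(q^{m},\rho)\leq 1-H_{q^m}(\rho)+O(\log t/t)$ recalled just above the corollary statement yields the second inequality.

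There is no genuine obstacle here: once Construction 1 and Corollary 1 are in hand, the whole argument is bookkeeping on cardinalities and a change of logarithmic base. The only points to be careful about are making sure $R/m$ is an integer (guaranteed by $m\mid R$), and verifying that the base-change factor $m$ appears exactly the right number of times so that the $m^{2}t$ denominator in $k_t(q,\rho,m)$ cancels cleanly against $t$ in $k_t(q^{m},\rho)$.
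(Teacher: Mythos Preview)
Your proposal is correct and follows essentially the same approach as the paper: both take an optimal Hamming covering code of size $K_{q^m}(t,R/m)$, apply Construction~1 and Corollary~1 to obtain a sum-rank covering code of size $K_{q^m}(t,R/m)^m$ and radius at most $R$, and then convert to the rate inequality via $\log_q x = m\log_{q^m}x$ before invoking the entropy bound.
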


\begin{proof}
    Let $\C'\subset {\mathbb{F}_{q^m}^t}$ be a code with size $K_{q^m}(t,\frac{R}{m})$. Then $SR_{covering}(\C',\dots,\C')\subset {\mathbb{F}_q^{(m,m),\dots,(m,m)}}$ is a sum-rank code with block length $t$, and we have
    \[
    R_{sr}(SR_{covering}(\C',\dots,\C'))\leq m \cdot \frac{R}{m}=R, \ K_{q,m}(t,R)\leq |SR_{covering}(\C',\dots,\C')|=\left(K_{q^m}(t,\frac{R}{m})\right)^m,
    \]
    and
    \[
    k_t(q,\rho,m)=\frac{\log_qK_{q,m}(t,\rho mt)}{m^2t}\leq \frac{m^2\log_{q^m}K_{q^m}(t,\rho t)}{m^2t}=k_t(q^m,\rho)\leq 1-H_{q^m}(\rho)+O(\frac{\log t}{t}).
    \]
\end{proof}

\section{Strong Singleton-like bounds on general sum-rank codes}

In this section, we present some strong Singleton-like bounds on sum-rank codes, these bounds are stronger than the Singleton-like bound, when block lengths are large. We first have the following simple Lemma.

\begin{lem}\label{covering radius of extended sum-rank code}
    Let $\C'\subset {\mathbb{F}}_q^{(m,m),\dots,(m,m)}$ be a sum-rank code with block length $t$ and  covering radius $R$. Then an extended sum-rank code $\C=\C'\bigoplus \left({\mathbb{F}_q^{(m,m)}}\right)^{n-t}$ for $n>t$  is a sum-rank code with block length $n$ and covering radius $R$, where $\left({\mathbb{F}_q^{(m,m)}}\right)^{n-t}$ denotes $\bigoplus\limits_{i=1}^{n-t}{\mathbb{F}_q^{(m,m)}}$.
\end{lem}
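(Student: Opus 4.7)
The plan is to establish the two inequalities $R_{sr}(\C) \leq R$ and $R_{sr}(\C) \geq R$ separately, each via a direct decomposition argument that exploits the block-product structure of the sum-rank metric.

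For the upper bound, I would take an arbitrary $\mathbf{x} = (\mathbf{x}_1, \dots, \mathbf{x}_n) \in {\mathbb{F}_q^{(m,m),\dots,(m,m)}}$ (with $n$ blocks) and split it as $\mathbf{x} = (\mathbf{x}', \mathbf{x}'')$, where $\mathbf{x}' = (\mathbf{x}_1, \dots, \mathbf{x}_t)$ lies in the ambient space of $\C'$ and $\mathbf{x}'' = (\mathbf{x}_{t+1}, \dots, \mathbf{x}_n) \in \left({\mathbb{F}_q^{(m,m)}}\right)^{n-t}$. Since $\C'$ has covering radius $R$, there is some $\mathbf{c}' \in \C'$ with $d_{sr}(\mathbf{x}', \mathbf{c}') \leq R$. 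Because the second summand of $\C$ is the \emph{full} space $\left({\mathbb{F}_q^{(m,m)}}\right)^{n-t}$, I may simply take $\mathbf{c}'' = \mathbf{x}''$, obtaining $d_{sr}(\mathbf{x}'', \mathbf{c}'') = 0$. The sum-rank weight decomposes additively over blocks, so $\mathbf{c} = (\mathbf{c}', \mathbf{c}'') \in \C$ satisfies $d_{sr}(\mathbf{x}, \mathbf{c}) = d_{sr}(\mathbf{x}', \mathbf{c}') + 0 \leq R$, giving $R_{sr}(\C) \leq R$.

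For the matching lower bound, I would exhibit an explicit witness. By definition of the covering radius of $\C'$ being exactly $R$, there exists some $\mathbf{x}' \in {\mathbb{F}_q^{(m,m),\dots,(m,m)}}$ (with $t$ blocks) whose minimum sum-rank distance to $\C'$ equals $R$. Extend it to $\mathbf{x} = (\mathbf{x}', \mathbf{0}, \dots, \mathbf{0})$ by padding with zero blocks. For any codeword $\mathbf{c} = (\mathbf{c}', \mathbf{c}'') \in \C$, additivity of the sum-rank weight gives $d_{sr}(\mathbf{x}, \mathbf{c}) = d_{sr}(\mathbf{x}', \mathbf{c}') + d_{sr}(\mathbf{0}, \mathbf{c}'') \geq d_{sr}(\mathbf{x}', \mathbf{c}') \geq R$. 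Hence $R_{sr}(\C) \geq R$, and combined with the previous step this yields equality.

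There is no real obstacle here, as the argument is essentially a direct consequence of the additivity of the sum-rank weight across the blocks together with the fact that the appended summand is the entire ambient matrix space, so it contributes no cost to the covering process. The only mild subtlety is making sure to verify both directions, since the statement asserts equality of covering radii, not merely an inequality.
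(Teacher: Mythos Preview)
Your proposal is correct and follows essentially the same approach as the paper: both split into the two inequalities, cover the last $n-t$ blocks exactly (you set $\mathbf{c}''=\mathbf{x}''$, the paper sets the last $n-t$ coordinates of $\mathbf{c}$ equal to those of $\mathbf{x}$) for the upper bound, and use a witness $\mathbf{x}'$ attaining the covering radius of $\C'$ together with block-additivity of the sum-rank weight for the lower bound. The only cosmetic difference is that you pad the witness with zero blocks while the paper leaves those blocks arbitrary, which is immaterial.
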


\begin{proof}
    For any ${\bf x}=({\bf x}_1,{\bf x}_2,\dots,{\bf x}_n)\in \left({\mathbb{F}_q^{(m,m)}}\right)^n$, there exists a codeword ${\bf c}=({\bf c}_1,\dots,{\bf c}_t,{\bf x}_{t+1},\dots,{\bf x}_n)\\\in \C$ such that
    \[
    d_{sr}({\bf x},{\bf c})=\sum_{i=1}^t{\rm rank}({\bf x}_i-{\bf c}_i)+\sum_{j=t+1}^n{\rm rank}({\bf x}_j-{\bf x}_j)\leq R+0=R.
    \]
    Then we have $R_{sr}(\C)\leq R$.  Since the covering radius of $\C'$ is $R$,  we have
    \[
    R=\underset{{\bf y}\in \left({\mathbb{F}_q^{(m,m)}}\right)^{t}}{\rm max}\,  \underset{{\bf c}'\in \C'}{\rm min}\{d_{sr}({\bf y},{\bf c}')\}.
    \]
    Then there exists ${\bf y}=({\bf y}_1,{\bf y}_2,\dots,{\bf y}_t)\in \left({\mathbb{F}_q^{(m,m)}}\right)^t$ such that
    \[
    \underset{{\bf c}'\in \C'}{\rm min}\{d_{sr}({\bf y},{\bf c}')\}=\underset{{\bf c}'\in \C'}{\rm min}\left\{\sum_{i=1}^t{\rm rank}({\bf y}_i-{\bf c}'_i)\right\}=R.
    \]
    Therefore for ${\bf x}=({\bf y}_1,\dots,{\bf y}_t,{\bf x}_{t+1},\dots,{\bf x}_n)\in \left({\mathbb{F}_q^{(m,m)}}\right)^n$, we have
    \[
     R_{sr}(\C) \geq  \underset{{\bf c}\in \C}{\rm min}\{d_{sr}({\bf x},{\bf c})\}=\underset{{\bf c}\in \C}{\rm min}\left\{\sum_{i=1}^t{\rm rank}({\bf y}_i-{\bf c}_i')+\sum_{j=t+1}^{n}{\rm rank}({\bf x}_j-{\bf c}_j)\right\}\geq R.
    \]
    Hence, $R_{sr}(\C)=R$.
\end{proof}

Then we give the strong Singleton-like bound by constructing a sum-rank covering code from binary BCH codes. We denote the binary primitive  BCH codes of length $2^m-1$ and designed distance $2e+1$ by ${\rm BCH}(e,m)$, it is a cyclic code with parameters $[2^m-1,k\geq 2^m-me-1,d\geq 2e+1]$, see \cite{Cohen1997}. Moreover, we have the following result on the covering radius of this code.

\begin{lem}\cite[Theorem 10.3.1]{Cohen1997}\label{covering radius of BCH(e,n)}
If $q=2^m\geq (2e-1)^{4e+2}$, then
\[2e-1\leq R_H({\rm BCH}(e,m))\leq 2e.\]
\end{lem}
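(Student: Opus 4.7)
The plan is to pass to the syndrome description of the code. A received word $y\in\mathbb{F}_2^n$ of length $n=2^m-1$ has syndrome $(S_1(y),S_3(y),\dots,S_{2e-1}(y))\in\mathbb{F}_{2^m}^e$ where $S_i(y)=\sum_{j\in\mathrm{supp}(y)}\alpha^{ij}$ and $\alpha$ is a primitive element of $\mathbb{F}_{2^m}$. The covering radius equals the maximum, over all syndrome tuples, of the minimum Hamming weight of any preimage. In characteristic $2$ the Frobenius identity $S_{2i}=S_i^2$ means that an odd-indexed tuple determines the first $2e$ power sums, and Newton's identities then determine the first $2e$ elementary symmetric polynomials $e_1,\dots,e_{2e}$ of a putative weight-$2e$ error pattern, together forming the error-locator polynomial $\sigma(X)=X^{2e}+e_1 X^{2e-1}+\cdots+e_{2e}$ whose roots are supposed to be the locators of the error.

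For the upper bound $R_H(\mathrm{BCH}(e,m))\leq 2e$ I would show that, for every syndrome tuple $(s_1,s_3,\dots,s_{2e-1})\in\mathbb{F}_{2^m}^e$, the associated $\sigma(X)$ splits completely into distinct nonzero linear factors over $\mathbb{F}_{2^m}$. This is equivalent to exhibiting an $\mathbb{F}_{2^m}$-point on the quasi-affine variety $V_s\subset\mathbb{F}_{2^m}^{2e}$ cut out by the $e$ power-sum equations $\sum_i x_i^{2k-1}=s_{2k-1}$, $k=1,\dots,e$, together with the open conditions $x_i\neq 0$ and $x_i\neq x_j$. A Weil/Bombieri estimate on the number of $\mathbb{F}_q$-points of $V_s$ (applied after verifying that an appropriate irreducible component survives the intersection) yields a bound of the form $|V_s(\mathbb{F}_q)|\geq q^e - C\, q^{e-1/2}$ for $q=2^m$, with the constant $C$ polynomial in $2e-1$; the exact threshold $q\geq (2e-1)^{4e+2}$ comes out of carefully accounting for the total degree $\prod_k(2k-1)$ of the defining equations and the error term in the Weil estimate.

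For the lower bound $R_H\geq 2e-1$ the strategy is to exhibit a single syndrome unreachable by any vector of weight $\leq 2e-2$. A naive dimension count gives this for free in principle, since the image of the weight-$\leq 2e-2$ syndrome map has size at most $\sum_{i=0}^{2e-2}\binom{n}{i}$ while the syndrome space has size $q^e$; however to make this effective I would pick a generic weight-$2e-1$ pattern with locators $\alpha_1,\dots,\alpha_{2e-1}$ chosen outside a proper subvariety of $(\mathbb{F}_{2^m}^*)^{2e-1}$, so that no polynomial of degree $\leq 2e-2$ can match all $2e$ power sums simultaneously. The existence of such a configuration reduces again to lower-bounding rational points on the complement of an explicit subvariety, and is easily guaranteed by the same Weil machinery for $q$ in the stated range.

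The main obstacle is the upper bound: specifically, proving that the intersection variety $V_s$ has an absolutely irreducible component of the expected dimension for \emph{every} parameter $s$, rather than degenerating for bad choices. One has to control how the Newton-identity equations can factor, and then invoke Bombieri's version of the Weil bound in a degree-sharp way; this degree bookkeeping is what produces the precise exponent $4e+2$ in the threshold and is the delicate part of the argument.
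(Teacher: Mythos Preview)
The paper does not prove this lemma at all; it is quoted verbatim from \cite[Theorem 10.3.1]{Cohen1997} and used as a black box in the proof of Theorem~3.1. So there is no ``paper's own proof'' to compare against.

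That said, your sketch deserves two comments.

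\textbf{Upper bound.} Your strategy via Weil-type estimates on the syndrome variety is indeed the classical one, and the threshold $(2e-1)^{4e+2}$ does arise from the degree bookkeeping in the character-sum bound. One inaccuracy: in characteristic $2$ the odd power sums $p_1,p_3,\dots,p_{2e-1}$ (together with $p_{2i}=p_i^2$) do \emph{not} determine all of $e_1,\dots,e_{2e}$ through Newton's identities. The even-indexed elementary symmetric functions $e_2,e_4,\dots,e_{2e}$ remain free parameters, so there is not a single error-locator polynomial $\sigma(X)$ attached to a syndrome but rather an $e$-parameter family. This freedom is exactly what makes the variety $V_s$ have dimension $e$; your second description in terms of $V_s$ is the correct one, and the $\sigma$-based phrasing should be dropped or rephrased.

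\textbf{Lower bound.} Here your approach is substantially more complicated than necessary, and the ``naive dimension count'' you mention actually fails for every $e\geq 2$: with $n=q-1$ one has $\sum_{i\le 2e-2}\binom{n}{i}\asymp q^{2e-2}$, which exceeds $q^e$ once $e\ge 2$, so no pigeonhole argument on syndromes is available. The standard (and essentially one-line) argument is the supercode lemma: since $\mathrm{BCH}(e,m)\subsetneq \mathrm{BCH}(e-1,m)$, every nonzero coset of the smaller code inside the larger contains a coset leader of weight at most $R_H(\mathrm{BCH}(e,m))$, and that leader is a nonzero word of $\mathrm{BCH}(e-1,m)$; hence $R_H(\mathrm{BCH}(e,m))\geq d(\mathrm{BCH}(e-1,m))\geq 2e-1$. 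No Weil machinery is needed for this direction, and the bound holds without the hypothesis $q\ge(2e-1)^{4e+2}$.
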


\begin{theorem}
    Let $\C\subset {\mathbb{F}_2^{(m,m),\dots,(m,m)}}$ be a binary sum-rank code with block length $t\geq 2^n-1$ and minimum sum-rank distance $d_{sr}$, e be a positive integer satisfying $2^n\geq (2e-1)^{4e+2}$. Then
    \begin{itemize}
        \item [\rm (1)] If $d_{sr}=4m^2e+i,\ i=1,2,\dots,4m^2-1$, then $|\C|\leq 2^{m^2(t-ne)}$.
        \item [\rm (2)] If $d_{sr}=4m^2e$, then $|\C|\leq 2^{m^2(t-ne+n)}$.
    \end{itemize}
\end{theorem}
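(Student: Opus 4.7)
I plan to produce, for each case, a sum-rank covering code $D\subset {\mathbb{F}_2^{(m,m),\dots,(m,m)}}$ of block length $t$ whose covering radius $R$ satisfies $2R<d_{sr}$ and whose size matches the claimed upper bound. The theorem will then follow from the standard sphere-covering argument: any two distinct codewords of $\C$ lying in the same radius-$R$ ball around some ${\bf d}\in D$ would be at sum-rank distance at most $2R<d_{sr}$, contradicting the minimum distance; since the balls $B({\bf d},R)$ with ${\bf d}\in D$ cover the whole space, this forces $|\C|\leq |D|$.

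For case (1), I take $B={\rm BCH}(e,n)$, which under the hypothesis $2^n\geq (2e-1)^{4e+2}$ is a binary code of length $2^n-1$ with codimension $ne$ and Hamming covering radius at most $2e$ by Lemma~\ref{covering radius of BCH(e,n)}. I build a candidate covering code $D_0$ of block length $2^n-1$ by requiring, for each of the $m^2$ matrix-entry positions $(i,j)\in\{1,\dots,m\}^2$, that the length-$(2^n-1)$ sequence of $(i,j)$-entries of the block matrices be a codeword of $B$. This gives $|D_0|=2^{m^2((2^n-1)-ne)}$. The key step is to show that the sum-rank covering radius of $D_0$ is at most $2m^2e$: given any $x$, I approximate each of the $m^2$ entry-sequences of $x$ by its nearest $B$-codeword (at Hamming distance $\leq 2e$) and assemble these into ${\bf c}\in D_0$. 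Using the elementary bound ${\rm rank}(M)\leq \#\{\text{nonzero rows of }M\}\leq \#\{\text{nonzero entries of }M\}$, the quantity $d_{sr}(x,{\bf c})=\sum_{k}{\rm rank}(x_{[k]}-{\bf c}_{[k]})$ is bounded by the total number of differing matrix entries across all $2^n-1$ blocks, and swapping the order of summation this equals $\sum_{(i,j)}d_H(\text{entry sequence of }x,\text{ entry sequence of }{\bf c})\leq m^2\cdot 2e$. Extending $D_0$ to block length $t$ via Lemma~\ref{covering radius of extended sum-rank code} preserves the covering radius and yields $|D|=2^{m^2(t-ne)}$; since $d_{sr}\geq 4m^2e+1=2(2m^2e)+1$, the sphere-covering argument produces case (1).

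For case (2) with $d_{sr}=4m^2e$, I repeat the construction using ${\rm BCH}(e-1,n)$ in place of ${\rm BCH}(e,n)$. The hypothesis of Lemma~\ref{covering radius of BCH(e,n)} for $e-1$ is implied by the one given when $e\geq 2$, and for $e=1$ the resulting bound $|\C|\leq 2^{m^2t}$ is trivial. The same recipe then produces $D$ of block length $t$ with sum-rank covering radius at most $2m^2(e-1)\leq 2m^2e-1$ (for $m\geq 1$) and size $2^{m^2(t-n(e-1))}=2^{m^2(t-ne+n)}$; the sphere-covering argument applies since $2\cdot 2m^2(e-1)+1\leq 4m^2e=d_{sr}$.

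The main obstacle is the covering-radius estimate for $D_0$: obtaining the sharp bound $2m^2e$ relies on the row-count upper bound on rank together with a careful double-counting across the $m^2$ entry-position sequences (applying Corollary~\ref{upper bound of radius of SR(C)} directly with a single covering code over ${\mathbb F}_{2^n}$ does not give the size-versus-radius trade-off needed for this binary sum-rank setting).
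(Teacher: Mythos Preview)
Your proposal is correct and is, underneath the packaging, the same proof as the paper's. The paper builds the covering code by tensoring ${\rm BCH}(e,n)$ up to $\mathbb{F}_{2^m}$ and then applying Construction~1/Corollary~\ref{upper bound of radius of SR(C)} to $m$ copies, whereas you impose the binary BCH constraint directly on each of the $m^2$ entry-position sequences; unwinding the tensor product shows these two descriptions define literally the same subcode $D_0$, and your one-step bound ${\rm rank}(M)\le \#\{\text{nonzero entries}\}$ collapses the paper's two-step estimate (first ``rank $\le$ nonzero rows'' via Theorem~2.1, then the $m$-fold blow-up from tensoring) into a single double count, arriving at the identical radius $2m^2e$ and size $2^{m^2(t-ne)}$.
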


\begin{proof}
    We consider a binary primitive BCH  code with parameters $[2^n-1,2^n-ne-1,2e+1]$ and covering radius $R$, denoted by ${\rm BCH}(e,n)$. Let
    \[{\rm BCH}(e,n)_{2^m}={\rm BCH}(e,n)\bigotimes {\mathbb{F}_{2^m}}\subset {\mathbb{F}_{2^m}^{2^n-1}},\]
    where $\bigotimes $ denotes tensor product, i.e., ${\rm BCH}(e,n)_{2^m}$ has parameters $[2^n-1,2^n-ne-1,2e+1]$ over ${\mathbb{F}_{2^m}}$ and  covering radius at most $mR$. By Lemma \ref{covering radius of BCH(e,n)}, $R_H({\rm BCH}(e,n)_{2^m})\leq 2me$, Then from Theorem 2.1, the sum-rank code $SR_{covering}({\rm BCH}(e,n)_{2^m},\dots,{\rm BCH}(e,n)_{2^m})\subset {\mathbb{F}_2^{(m,m),\dots,(m,m)}}$ with block length $2^n-1$ has cover radius at most $2m^2e$ and
    \[
        |SR_{covering}({\rm BCH}(e,n)_{2^m},\dots,{\rm BCH}(e,n)_{2^m})|=|{\rm BCH}(e,n)_{2^m}|^{m}=2^{m^2(2^n-ne-1)}.
    \]
    Let
    \[\C'=SR_{covering}({\rm BCH}(e,n)_{2^m},\dots,{\rm BCH}(e,n)_{2^m})\bigoplus \left({\mathbb{F}_q^{(m,m)}}\right)^{t-(2^n-1)},
    \]
    then by Lemma \ref{covering radius of extended sum-rank code}, we have \[R_{sr}(\C')\leq 2m^2e, \ {\rm and}\ |\C'|=2^{m^2(2^n-ne-1)}\cdot 2^{m^2(t-(2^n-1))}=2^{m^2(t-ne)}.
    \]
    If $d_{sr}=4m^2e+i\geq 4m^2e+1$, then we have each ball centered at a codeword of $\C'$ with radius $R_{sr}(\C')$ contains at most one codeword of $\C$. Moreover, all balls centered at all codewords of $\C'$ with radius $R_{sr}(\C')$ cover the whole space ${\mathbb{F}}_q^{(m,m),\dots,(m,m)}$ ($t$ blocks). Hence,
    \[|\C|\leq |\C'|=2^{m^2(t-ne)}.\]

    If $d_{sr}=4m^2e$. Similarly, we can construct  a sum-rank code $SR_{covering}({\rm BCH}(e-1,n)_{2^m},\dots,{\rm BCH}(e-1,n)_{2^m})\subset {\mathbb{F}_2^{(m,m),\dots,(m,m)}}$ with block length $2^n-1$ has cover radius at most $2m^2(e-1)$ and
    \[
        |SR_{covering}({\rm BCH}(e-1,n)_{2^m},\dots,{\rm BCH}(e-1,n)_{2^m})|=|{\rm BCH}(e-1,n)_{2^m}|^{m}=2^{m^2(2^n-n(e-1)-1)}.
    \]
    Let
    \[\C'=SR_{covering}({\rm BCH}(e-1,n)_{2^m},\dots,{\rm BCH}(e-1,n)_{2^m})\bigoplus \left({\mathbb{F}_q^{(m,m)}}\right)^{t-(2^n-1)},
    \]
    then  we have
    \[R_{sr}(\C')\leq 2m^2(e-1),\ {\rm and}\  |\C'|=2^{m^2(2^n-n(e-1)-1)}\cdot 2^{m^2(t-(2^n-1))}=2^{m^2(t-ne+n)}.
    \]
    If $d_{sr}=4m^2e> 2R_{sr}(\C')+1$, similarly, we have
    \[|\C|\leq |\C'|=2^{m^2(t-ne+n)}.\]
\end{proof}

\begin{remark}
    When block lengths are large, for example, $n\geq 4m^2+1$, the above strong Singleton-like bound is much tighter than Singleton-like bound for sum-rank codes. On the other hand, it is clear if smaller covering codes in the sum-rank metric can be constructed, the above bound can be improved.
\end{remark}

If  the minimum sum-rank distance $d_{sr}=2R+1$ is fixed, we have the following strong Singleton-like bound from the block length function.
\begin{theorem}
    Let $q$ be a prime power, $m$, $u$, $R$ be  positive integers satisfying $m^2|u$ and $m|R$, $\C\subset {\mathbb{F}_q^{(m,m),\dots,(m,m)}}$ be a sum-rank code with block length $t$ and minimum sum-rank distance $d_{sr}=2R+1$. If
    \[
    t\geq cq^{\frac{(u-m)R+m^2}{R}}\cdot (m\ln q)^{\frac{m}{R}},
    \]
    where  $c $ is a universal constants independent of $q$ and $m$, then $\C$ has at most $q^{(t-1)m^2-u\cdot \frac{d_{sr}-1}{2}}$ codewords.
\end{theorem}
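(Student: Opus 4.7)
The plan is to combine Theorem 3.2 with Lemma 3.1 and the standard "distinct codewords cannot share a covering ball" argument, which is exactly how strong Singleton-like bounds are usually derived from short covering codes. First I would invoke Theorem 3.2 with parameters $(uR+m^2, R)$ (note the divisibility hypotheses $m^2 \mid u$ and $m \mid R$ imply $m^2 \mid uR + m^2$, so the theorem applies) to obtain a sum-rank covering code $\C''\subset \mathbb{F}_q^{(m,m),\dots,(m,m)}$ with block length $\ell=\ell_{q,m}(uR+m^2,R)$, codimension $uR+m^2$, and covering radius at most $R$. The hypothesis on $t$ gives $\ell\le cq^{((u-m)R+m^2)/R}(m\ln q)^{m/R}\le t$.

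Next, I would invoke Lemma 3.1 to extend $\C''$ by appending $t-\ell$ free blocks, producing a sum-rank code $\C'=\C''\oplus(\mathbb{F}_q^{(m,m)})^{t-\ell}$ with block length $t$ and the same covering radius $R$. The size grows by the factor $q^{m^2(t-\ell)}$, so
\[
|\C'|=q^{\ell m^2-(uR+m^2)}\cdot q^{m^2(t-\ell)}=q^{tm^2-uR-m^2}=q^{(t-1)m^2-uR},
\]
i.e.\ the codimension $uR+m^2$ is preserved under the extension.

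Finally, I would use the packing argument. Since $d_{sr}(\C)=2R+1$, any two distinct codewords of $\C$ have sum-rank distance at least $2R+1>2R$, so they cannot both lie in the same closed ball of sum-rank radius $R$. The balls $B({\bf c},R)$ centered at the codewords ${\bf c}\in \C'$ cover the whole space $\mathbb{F}_q^{(m,m),\dots,(m,m)}$ (block length $t$), hence each codeword of $\C$ lies in at least one of them, and each ball contains at most one codeword of $\C$. Therefore
\[
|\C|\le |\C'|=q^{(t-1)m^2-uR}=q^{(t-1)m^2-u\cdot\frac{d_{sr}-1}{2}},
\]
using $R=(d_{sr}-1)/2$, which is exactly the claimed bound.

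The routine part is the two computations (the extension preserves codimension, and $R=(d_{sr}-1)/2$); there is no real obstacle beyond checking that the divisibility conditions $m^2\mid uR+m^2$ and $m\mid R$ required by Theorem 3.2 are satisfied, and confirming that the length-function definition gives a covering radius \emph{at most} $R$ (so that the packing argument is valid). Everything else is just chaining Theorem 3.2 $\to$ Lemma 3.1 $\to$ sphere-packing.
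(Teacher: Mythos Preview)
Your proposal is correct and follows essentially the same route as the paper: invoke the block-length-function bound (the paper's Theorem~\ref{upper bound of length function in the sum-rank metric}) to obtain a short covering code of codimension $uR+m^2$ and radius at most $R$, extend it to block length $t$ via Lemma~\ref{covering radius of extended sum-rank code}, and then apply the packing argument using $d_{sr}=2R+1$. The only differences are notational (the paper calls the extended code $\C_2$ rather than $\C'$) and that you spell out the divisibility check and the codimension computation a bit more explicitly than the paper does.
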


\begin{proof}
    It is known that $cq^{\frac{(u-m)R+m^2}{R}}\cdot (m\ln q)^{\frac{m}{R}}\geq \ell_{q,m}(uR+m^2,R)$ by Theorem \ref{upper bound of length function in the sum-rank metric}. Let $\C_1\subset {\mathbb{F}_q^{(m,m),\dots,(m,m)}}$ be a sum-rank code with  block length $\ell_{q,m}(uR+m^2,R)$, and
    \[
    \C_2=\C_1\bigoplus \left({\mathbb{F}_q^{(m,m)}}\right)^{t-\ell_{q,m}(uR+m^2,R)}.
    \]
    Then we have
    ${\rm dim}_{\mathbb{F}_q}\C_2=m^2t-(uR+m^2)$, and $R_{sr}(\C_2)=R$. Since $d_{sr}=2R+1$, then we have each ball centered at a codeword of $\C_2$ with radius $R_{sr}(\C_2)=R$ contains at most one codeword of $\C$. Moreover, all balls centered at all codewords of $\C_2$ with radius $R$ cover the whole space ${\mathbb{F}}_q^{(m,m),\dots,(m,m)}$ ($t$ blocks). Hence,
    \[
    |\C|\leq |\C_2|=q^{m^2t-(uR+m^2)}=q^{(t-1)m^2-u\cdot \frac{d_{sr}-1}{2}}.
    \]
\end{proof}

\begin{remark}
    The above result asserts that if the block length is sufficiently larger than a function of $u$, and $u\geq 2m-\frac{m^2}{R}$, then we get a strong Singleton-like bound, which is tighter than the Singleton-like bound.
\end{remark}

\section{Constructions of quasi-perfect sum-rank codes}

In this section, we construct a class of quasi-perfect codes in the sum-rank metric with the matrix size $2\times m$. Before giving the explicit constructions of this family of quasi-perfect codes, we need the following result.

\begin{lem}\label{the number of rank 1}
    There are $\frac{(q^{s_1}-1)(q^{s_2}-1)}{q-1}$ matrices with rank one in $\mathbb{F}_q^{(s_1,s_2)}$.
\end{lem}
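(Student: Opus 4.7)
The plan is to parametrize rank-one matrices by factorizations $M = \mathbf{u}\mathbf{v}^T$ and then account for the overcounting. Any nonzero matrix $M \in \mathbb{F}_q^{(s_1,s_2)}$ of rank one has a one-dimensional column space spanned by some nonzero $\mathbf{u} \in \mathbb{F}_q^{s_1}$, so every column of $M$ is a scalar multiple of $\mathbf{u}$, and collecting these scalars as a row vector $\mathbf{v}^T$ gives the factorization $M = \mathbf{u}\mathbf{v}^T$ with $\mathbf{v} \in \mathbb{F}_q^{s_2}$ nonzero (otherwise $M = 0$).

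Next I would count the pairs $(\mathbf{u},\mathbf{v}) \in (\mathbb{F}_q^{s_1}\setminus\{0\}) \times (\mathbb{F}_q^{s_2}\setminus\{0\})$, which clearly gives $(q^{s_1}-1)(q^{s_2}-1)$. The key observation is then that for any $\lambda \in \mathbb{F}_q^* $ the pair $(\lambda\mathbf{u}, \lambda^{-1}\mathbf{v})$ yields the same matrix, and conversely if $\mathbf{u}\mathbf{v}^T = \mathbf{u}'\mathbf{v}'^T$ then, since both sides have the same one-dimensional column space, there exists a unique nonzero scalar $\lambda$ with $\mathbf{u}' = \lambda\mathbf{u}$, and comparing any nonzero column forces $\mathbf{v}' = \lambda^{-1}\mathbf{v}$. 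Hence each rank-one matrix arises from exactly $q-1$ pairs, and dividing gives the claimed count $\frac{(q^{s_1}-1)(q^{s_2}-1)}{q-1}$.

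There is no real obstacle here; the only subtlety worth stating carefully is the uniqueness of the scalar $\lambda$ in the factorization, which is where the denominator $q-1$ comes from. I would keep the proof short, writing it as a two-step argument (existence of factorization, then orbit counting under the $\mathbb{F}_q^*$-action $\lambda\cdot(\mathbf{u},\mathbf{v}) = (\lambda\mathbf{u}, \lambda^{-1}\mathbf{v})$).
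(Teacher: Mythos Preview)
Your proposal is correct and follows essentially the same approach as the paper: parametrize rank-one matrices as $\mathbf{u}\mathbf{v}^T$, count the $(q^{s_1}-1)(q^{s_2}-1)$ pairs, and divide by the $q-1$ scalars $\lambda\in\mathbb{F}_q^*$ giving the same product. If anything, your version is slightly more careful than the paper's, since you explicitly verify the converse (uniqueness of $\lambda$) that justifies dividing by exactly $q-1$.
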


\begin{proof}
    Let $M_{s_1\times s_2}$ denote a matrix with rank one, then there exist column vectors ${\bf u}\in {\mathbb{F}_q^{s_1}}$ and ${\bf v}\in {\mathbb{F}_q^{s_2}}$ such that ${M_{s_1\times s_2}}={\bf u}{\bf v}^T$ with ${\bf u},{\bf v}\neq {\bf 0}$. Therefore, we have $q^{s_1}-1$ vectors ${\bf u}$ and  $q^{s_2}-1$ vectors ${\bf v}$. Since ${\bf u}{\bf v}^T=c{\bf u}(c^{-1}{\bf v})^T$ denotes the same matrix, where $c\in {\mathbb{F}_q^*}$. Hence, there are $\frac{(q^{s_1}-1)(q^{s_2}-1)}{q-1}$ matrices with rank one in $\mathbb{F}_q^{(s_1,s_2)}$.
\end{proof}

\begin{lem}\label{volume of ball with radius 2 in the sum-rank metric}
    The volume of the ball $V_{sr}(q,2)\subset {\mathbb{F}_q^{(s,s),\dots,(s,s)}}$ with $t$ blocks satisfies
    \[V_{sr}(q,2)\geq \frac{t(t-1)(q^s-1)^4}{2(q-1)^2}.\]
\end{lem}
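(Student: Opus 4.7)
The plan is to obtain the lower bound by exhibiting a concrete family of elements in the ball of sum-rank radius $2$ and simply counting them, rather than trying to enumerate the ball exactly. Since the statement is an inequality, it is enough to count a clean sub-family whose cardinality already matches the target.

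First, I would fix the structural shape of the sub-family. Inside $\mathbb{F}_q^{(s,s),\dots,(s,s)}$ with $t$ blocks, I consider all tuples $\mathbf{x}=(\mathbf{x}_1,\dots,\mathbf{x}_t)$ such that exactly two of the blocks $\mathbf{x}_i,\mathbf{x}_j$ (with $i<j$) are rank-$1$ matrices in $\mathbb{F}_q^{(s,s)}$ and all other blocks are the zero matrix. Each such tuple has sum-rank weight exactly $2$, so $d_{sr}(\mathbf{x},\mathbf{0})=2$ and $\mathbf{x}$ lies in the ball of radius $2$ centered at $\mathbf{0}$. Distinct choices of the index pair $(i,j)$ or of the rank-$1$ matrices produce distinct tuples, so there is no overcounting.

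Next, I would count: the number of unordered pairs of positions is $\binom{t}{2}=\frac{t(t-1)}{2}$, and by Lemma \ref{the number of rank 1}, the number of rank-$1$ matrices in $\mathbb{F}_q^{(s,s)}$ equals $\frac{(q^s-1)^2}{q-1}$. Choosing one such matrix independently for each of the two selected positions gives $\left(\frac{(q^s-1)^2}{q-1}\right)^2$ matrix-configurations per pair. Multiplying,
\[
V_{sr}(q,2)\ \geq\ \binom{t}{2}\left(\frac{(q^s-1)^2}{q-1}\right)^{\!2}\ =\ \frac{t(t-1)(q^s-1)^4}{2(q-1)^2},
\]
which is exactly the claimed bound.

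There is no substantive obstacle here; the only thing to double-check is that the chosen family is contained in the radius-$2$ ball (it is, by the very definition of sum-rank weight as the sum of the block ranks) and that the counted tuples are pairwise distinct (they are, since the support pair and the individual matrices can be read off from the tuple). The looseness in the inequality comes from ignoring contributions of weight $0$ and $1$, and also the weight-$2$ contributions where a single block has rank $2$; since we only need a lower bound, discarding these is harmless.
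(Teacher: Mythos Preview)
Your proof is correct and follows essentially the same approach as the paper: both count tuples with exactly two rank-$1$ blocks at distinct positions (and zeros elsewhere), using Lemma~\ref{the number of rank 1} for the rank-$1$ count and $\binom{t}{2}$ for the position pairs. Your write-up is in fact slightly more careful, explicitly verifying that the counted tuples are pairwise distinct and lie in the ball.
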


\begin{proof}
    Let $V_{sr}(q,2)$ denote all vectors in ${\mathbb{F}_q^{(s,s),\dots,(s,s)}}$ with sum-rank weight less than or equal to 2. We consider any two positions of the vectors in ${\mathbb{F}_q^{(s,s),\dots,(s,s)}}$, where the matrices corresponding to these two positions have rank 1. Then by Lemma \ref{the number of rank 1}, we have
    \[
    V_{sr}(q,2)\geq \left(\begin{array}{c}
        t\\
        2
    \end{array}\right)\frac{(q^s-1)^2}{q-1}\frac{(q^s-1)^2}{q-1}=\frac{t(t-1)(q^s-1)^4}{2(q-1)^2}.
    \]
\end{proof}

We recall the construction of sum-rank codes in \cite{Chen2023}.

\begin{con}
    Let $\C_1,\C_2,\dots,\C_m$ be $m$ general  Hamming metric codes in ${\mathbb{F}_{q^m}^t}$, for any codeword ${\bf c}_i\in \C_i,\ i=1,\dots,m$, we denote ${\bf c}_{ij}$ the $j$-th position of codeword ${\bf c}_i$ for $1\leq j\leq t$. Then a sum-rank code can be given as follows:
\[
SR(\C_1,\dots,\C_m)=\left\{{\bf c}=\left(M_1,M_2,\dots,M_t\right) \right\},
\]
where $M_i$ denote the matrix corresponding to  $q$-polynomial $f_i(x)={\bf c}_{1i}x+{\bf c}_{2i}x^{q}+\cdots+{\bf c}_{mi}x^{q^{m-1}},\ i=1,\dots,t$. It is easy to see that the size of this sum-rank code is $\prod\limits_{i=1}^m|\C_i|$.
\end{con}

Let $\C_1,\C_2,\dots,\C_n\subset {\mathbb{F}_{q^m}^t}$ be $n$ linear codes. Then a sum-rank code over ${\mathbb{F}_q^{(n,m),\dots,(n,m)}}$ with block length $t$ is given as follows:
\[
SR(\C_1,\dots,\C_n)=\left\{{\bf c}=\left(M_1,M_2,\dots,M_t\right) \right\},
\]
where $M_i$ denote the matrix corresponding to  $q$-polynomial $f_i(x)={\bf c}_{1i}\phi(x)+{\bf c}_{2i}\phi(x^{q})+\cdots+{\bf c}_{ni}\phi(x^{q^{n-1}}),\ i=1,\dots,t$.

If we view a $q$-polynomial $f(x)=a_1x+a_2x^q+\cdots+a_mx^{q^{m-1}}$ as an ${\mathbb{F}_q}$-linear mapping, where $\ a_i\in {\mathbb{F}_{q^m}},\ i=1,2,\dots,m$, and fix a basis of ${\mathbb{F}_{q^m}}$ over ${\mathbb{F}_q}$, then this $q$-polynomial corresponds a $m\times m$ matrix over ${\mathbb{F}_q}$, and it is easy to verify that the rank of ${\mathbb{F}_q}$-linear mapping does not depend on the basis.

In the case of sum-rank codes over ${\mathbb{F}_{q}^{(n,m),\dots,(n,m)}}$, where $n<m$. We can identify the matrix space ${\mathbb{F}_q^{(n,m)}}$ as the space of all $q$-polynomials $a_1\phi(x)+a_2\phi(x^q)+\dots+a_{n}\phi(x^{q^{n-1}})$, where $x\in {\mathbb{F}_{q^n}}$, $a_i\in {\mathbb{F}}_{q^m}$ for $1\leq i\leq n $ and $\phi: {\mathbb{F}_{q^n}}\rightarrow {\mathbb{F}_{q^m}}$ is an ${\mathbb{F}}_q$-linear injective mapping.

\begin{lem}\cite[Theorem 2.1]{Chen2023}
    Let $\C_1,\C_2,\dots,\C_m\subset {\mathbb{F}_{q^m}^t}$ be $m$ linear codes with minimum Hamming distance $d_1,d_2,\dots,d_m$ respectively. Then sum-rank code $SR(\C_1,\C_2,\dots,\C_m)$ with block length $t$ has the minimum sum-rank distance
    \[d_{sr}(SR(\C_1,\dots,\C_m))\geq {\rm min} \{d_1,2d_2,\dots,md_m\}.
    \]
\end{lem}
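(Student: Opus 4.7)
The plan is to show that for every nonzero codeword of $SR(\C_1,\ldots,\C_m)$ the sum-rank weight is bounded below by $k\,d_k$ for some $k\in\{1,\ldots,m\}$, which immediately gives $d_{sr}\ge \min\{d_1,2d_2,\ldots,m d_m\}$.

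Fix a nonzero codeword coming from $({\bf c}_1,\ldots,{\bf c}_m)\in \C_1\times\cdots\times\C_m$ and let $k=\min\{j:{\bf c}_j\neq {\bf 0}\}$. Since ${\bf c}_j={\bf 0}$ for $j<k$, at every position $i$ the associated $q$-polynomial simplifies to $f_i(x)=\sum_{j=k}^{m}{\bf c}_{ji}\,x^{q^{j-1}}$. The central step is the factorization $f_i(x)=g_i\!\left(x^{q^{k-1}}\right)$, where $g_i(y)=\sum_{j=k}^{m}{\bf c}_{ji}\,y^{q^{j-k}}$ is a linearized polynomial in $y$ of $q$-degree at most $m-k$. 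Because $x\mapsto x^{q^{k-1}}$ is an ${\mathbb F}_q$-automorphism of ${\mathbb F}_{q^m}$, the ${\mathbb F}_q$-linear maps $f_i$ and $g_i$ have the same rank.

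For every $i\in supp({\bf c}_k)$ the coefficient ${\bf c}_{ki}$ is nonzero, so $g_i\neq 0$ and its kernel in ${\mathbb F}_{q^m}$ has ${\mathbb F}_q$-dimension at most its $q$-degree $m-k$. Hence ${\rm rank}(f_i)={\rm rank}(g_i)\ge m-(m-k)=k$. Summing only over the support of ${\bf c}_k$ then yields
\[
wt_{sr}(M_1,\ldots,M_t)=\sum_{i=1}^{t}{\rm rank}(f_i)\;\ge\; k\cdot |supp({\bf c}_k)|\;\ge\; k\,d_k,
\]
and taking the minimum over $k\in\{1,\ldots,m\}$ produces the claimed bound.

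The non-obvious move, and the step I expect to be the main obstacle, is the choice of $k$: one must take the \emph{smallest} index of a nonzero component rather than the largest, and then use a Frobenius-power change of variable to re-expose the codeword as a linearized polynomial of low $q$-degree, forcing rank at least $k$. A naive largest-index choice only yields rank $\ge m-k+1$ at positions of $supp({\bf c}_k)$, which produces the coefficient pattern $(m-k+1)\,d_k$ rather than the desired $k\,d_k$.
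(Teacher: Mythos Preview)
Your proof is correct. The paper does not supply its own proof of this lemma --- it is quoted as \cite[Theorem 2.1]{Chen2023} and used as a black box --- so there is nothing in the present paper to compare against. The argument you give (take the least index $k$ with ${\bf c}_k\neq{\bf 0}$, factor $f_i(x)=g_i(x^{q^{k-1}})$ through the Frobenius bijection, and bound the kernel of the resulting $q$-polynomial of $q$-degree at most $m-k$) is exactly the standard proof of this fact and is the one given in \cite{Chen2023}.
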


\begin{theorem}
    Let $t=\frac{q^{mu}-1}{q^m-1}$, $u$ be a positive integer. Then an almost distance-optimal and quasi-perfect sum-rank code with block length $t$, matrix size $2\times m$ is constructed explicitly.
\end{theorem}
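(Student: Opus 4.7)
My plan is to apply the construction $SR(\C_1,\C_2)$ recalled just before the theorem statement, with $n=2$ and matrix size $2\times m$, to a pair of linear ambient codes $\C_1,\C_2\subset\mathbb{F}_{q^m}^t$ of length $t=\frac{q^{mu}-1}{q^m-1}$. The choice of $t$ is motivated by the fact that it is precisely the length of a $q^m$-ary Hamming code of redundancy $u$. Accordingly, I take $\C_2$ to be that Hamming code, with parity-check matrix whose columns are representatives of $\mathbb{F}_{q^{mu}}^{\ast}/\mathbb{F}_{q^m}^{\ast}$; then $\C_2$ has parameters $[t,t-u,3]_{q^m}$ and Hamming covering radius $1$. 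For $\C_1$ I take a BCH-like subcode of length $t$ over $\mathbb{F}_{q^m}$ obtained by adjoining one more cyclotomic coset of parity checks, designed so that $d_1\geq 4$ and $\dim\C_1$ stays close to $t-u-1$.

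The minimum sum-rank distance then follows immediately from the Chen--2023 lemma recalled above: $d_{sr}(SR(\C_1,\C_2))\geq \min\{d_1,2d_2\}\geq\min\{4,6\}=4$. The size equals $|\C_1|\cdot|\C_2|$, and comparing $\log_{q^m}(|\C_1|\cdot|\C_2|)$ with the Singleton-like upper bound $m(2t-3)$ yields a Singleton defect bounded by a small constant, which establishes the ``almost distance-optimal'' half of the claim.

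The crux of the argument is the covering-radius bound $R_{sr}(SR(\C_1,\C_2))\leq 2$. Given an arbitrary target $(a,b)\in(\mathbb{F}_{q^m}^t)^{2}$, I would first exploit the Hamming covering radius of $\C_2$ to find $c_2\in\C_2$ with $b-c_2$ supported on at most one position $r_0$, and then choose $c_1\in\C_1$ so that the sum of block ranks $\sum_{r}{\rm rank}\bigl(M(a_r-c_{1,r},b_r-c_{2,r})\bigr)$ is at most $2$. The subtle point, and the main obstacle, is that in the $SR$ construction a single nonzero coefficient at a block already produces a rank-$2$ matrix, so Hamming-covering $a$ and $b$ independently only gives $R_{sr}\leq 4$. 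The correct argument forces $a-c_1$ to be supported inside $\{r_0\}$ too, which reduces to showing that the puncture of $\C_1$ at any position $r_0$ covers $\mathbb{F}_{q^m}^{t-1}$; this is where the specific BCH/cyclic structure of $\C_1$ enters. The remaining positions where both coordinates would disagree are absorbed using the rank-$1$ degenerations of the $q$-polynomial $(a-c_1)\phi(x)+(b-c_2)\phi(x^{q})$, which occur exactly when its two coefficients are $(q-1)$-proportional in $\mathbb{F}_{q^m}$.

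Combining $d_{sr}=4$ with $R_{sr}=2=\lfloor(4-1)/2\rfloor+1$ gives the quasi-perfect property, and, as already noted in the introduction of the paper, quasi-perfectness with even minimum distance automatically implies distance-optimality. Thus the construction delivers a code satisfying all the claimed properties.
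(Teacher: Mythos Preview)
Your covering-radius argument has a genuine gap, and it originates in your choice of component codes, which is essentially the reverse of the paper's.

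The paper takes $\C_1$ to be the $q^m$-ary Hamming code $[t,t-u,3]_{q^m}$ and $\C_2$ to be the single-parity-check code $[t,t-1,2]_{q^m}$, giving $d_{sr}\ge\min\{3,4\}=3$. Its covering argument uses a property specific to the parity-check code: for \emph{any} prescribed position $j$, one can choose $c_2\in\C_2$ with $v_2-c_2$ supported exactly at $j$ (namely $c_2=v_2-(\sum_i v_{2,i})e_j$). One first covers $v_1$ by the Hamming code, obtaining an uncontrolled error position $j$; then one covers $v_2$ by $\C_2$ at that same $j$. The whole difference now lives in a single $2\times m$ block, hence has sum-rank at most $2$.

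In your version you need the analogous flexibility from $\C_1$, and you phrase this as ``the puncture of $\C_1$ at any position $r_0$ covers $\mathbb{F}_{q^m}^{t-1}$.'' But this is impossible for dimension reasons: your BCH-like $\C_1$ has dimension at most $t-u-1$, so its projection to any $t-1$ coordinates has dimension at most $t-u-1<t-1$ for $u\ge 1$, and cannot be surjective. No cyclic or BCH structure can overcome this linear-algebraic obstruction. The follow-up idea of ``rank-$1$ degenerations'' does not help either: at a position $i\neq r_0$ where $a_i-c_{1,i}\neq 0$ and $b_i-c_{2,i}=0$, the block is $(a_i-c_{1,i})\phi(x)$, which has rank $2$ (not $1$) because $\phi$ is injective on $\mathbb{F}_{q^2}$; and you have no mechanism to force the proportionality condition simultaneously at several uncontrolled positions.

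The upshot is that one of the two ambient codes must be $[t,t-1,2]_{q^m}$ (or something with the same ``error-at-a-chosen-coordinate'' property), and once that is fixed the natural partner is the Hamming code in the \emph{first} slot. The paper then gets only $d_{sr}\ge 3$, and deduces ``almost distance-optimal'' from the sphere-packing inequality $V_{sr}(q,2)>q^{m(u+1)}$ rather than from the even-distance criterion you invoke.
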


\begin{proof}
    Let $\C_1$ be a Hamming code with parameters $[t,t-u,3]_{q^m}$, $\C_2$ be a trivial cyclic code with parameters $[t,t-1,2]_{q^m}$. Then we have $SR(\C_1,\C_2)$ is a sum-rank code with block length $t$, matrix size $2\times m$, and
    \[
    d_{sr}(SR(\C_1,\C_2))\geq 3,\ {\rm dim}_{\mathbb{F}_q}(SR(\C_1,\C_2))=2mt-m(u+1).
    \]
    It is an almost distance-optimal since
    \[
    V_{sr}(q,2)\geq \frac{t(t-1)}{2}(q+1)^2(q^m-1)^2=\frac{(q^{mu}-1)(q^{mu}-q^m)}{2}(q+1)^2>q^{m(u+1)}.
    \]
    Then we prove that this sum-rank code is quasi-perfect. It is known that the Hamming code $\C_1$ is a perfect code in the Hamming metric, then we have $R_H(\C_1)=1$. Therefore, there exists a codeword ${\bf c}_1\in \C_1$, such that $wt_{H}({\bf v}_1-{\bf c}_1)=1$ for any vector ${\bf v}_1\in {\mathbb{F}_{q^m}^t}$. Without loss of generality, we assume that $j$-th position of vector ${\bf v}_1-{\bf c}_1$ is not zero, denoted by $({\bf v}_1-{\bf c}_1)_j\neq 0$. Moreover, $\C_2$ is a trivial code with radius one. Hence, for any vector ${\bf v}_2\in {\mathbb{F}_{q^m}^t}$ , there exists a codeword ${\bf c}_2\in \C_2$ such that $wt_H({\bf v}_2-{\bf c}_2)=1$, and the nonzero position can be at any position in $\{1,2,\dots,t\}$, we assume that $({\bf v}_2-{\bf c}_2)_j\neq 0$. Therefore, for any ${\bf v}=(M_1,M_2,\dots,M_t)\in {\mathbb{F}_q^{(2,m),\dots,(2,m)}}$, there exists a codeword ${\bf c}=(N_1,N_2,\dots,N_t)\in SR(\C_1,\C_2)$, where $M_i$ and $N_i$ denote the matrix corresponding to  $q$-polynomial
    \[f_i(x)={\bf v}_{1i}\phi(x)+{\bf v}_{2i}\phi(x^{q}),\ g_i(x)={\bf c}_{1i}\phi(x)+{\bf c}_{2i}\phi(x^{q}),  i=1,\dots,t\]
    respectively, such that ${\bf v}-{\bf c}$ has only one nonzero position, and we have \[wt_{sr}({\bf v}-{\bf c})\leq 2,\ R_{sr}(SR(\C_1,\C_2))\leq 2.\]
    Therefore, $SR(\C_1,\C_2)$ is a quasi-perfect sum-rank code.
\end{proof}

Before giving the quasi-perfect sum-rank codes with matrix size $2\times 2$,  we recall some results on quasi-perfect codes \cite{Giulietti2007} in the Hamming metric. Let
\[s_{m,q}=3(q^{\left\lfloor \frac{m-3}{2}\right\rfloor} +q^{\left\lfloor \frac{m-3}{2}\right\rfloor-1}+\cdots+q)+2,\ n=2q^{\frac{m-2}{2}}+s_{m,q},\]
$q$ be an even square, $m\geq 7$ be odd. An infinitely family of quasi-perfect codes with parameters $[n,n-m,4]_q$ were constructed, see \cite[Proposition 2.5]{Giulietti2007}. For other even prime power, infinitely family of quasi-perfect codes with parameters $[n,n-m,4]_q$ was constructed, see \cite[Proposition 4.1]{Giulietti2007}. It is obvious that quasi-perfect codes and sum-rank codes with even minimum distances and even minimum sum-rank distances are distance-optimal. Moreover, we have the following lemma for the binary sum-rank codes with matrix size $2\times 2$.

\begin{lem}\cite[Theorem 2.2]{Chen2025}\label{weight of 22 sum-rank codes}
    Let $\C_1\subset {\mathbb{F}_4^t}$ and $\C_2\subset {\mathbb{F}_4^t}$ be two linear codes with parameters $[t,k_1,d_1]$ and $[t,k_2,d_2]$, respectively. Then a binary linear sum-rank code $SR(\C_1,\C_2)$ with block length $t$, matrix size $2\times 2$ can be constructed explicitly. For any codeword ${\bf c}_1\in \C_1$ and ${\bf c}_2\in \C_2$. Set
    \[I=supp({\bf c}_1)\cap supp({\bf c}_2).\]
    Then
    \[wt_{sr}({\bf c})=2wt_H({\bf c}_1)+2wt_H({\bf c}_2)-3|I|,\]
    where ${\bf c}=(M_1,M_2,\dots, M_t)$, and $M_i$ denote the matrix  corresponding to  $q$-polynomial $f_i(x)={\bf c}_{1i}x+{\bf c}_{2i}x^{q},\ i=1,\dots,t$.
\end{lem}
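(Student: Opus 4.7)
The plan is to evaluate $\mathrm{rank}(M_i)$ one block at a time by examining the four cases for $({\bf c}_{1i}, {\bf c}_{2i}) \in \mathbb{F}_4^2$, then sum these contributions over $i = 1, \ldots, t$. Since $q=2$ and the outer alphabet is $\mathbb{F}_{q^m}=\mathbb{F}_4$, the matrix $M_i$ corresponds to the $\mathbb{F}_2$-linear endomorphism $f_i \colon \mathbb{F}_4 \to \mathbb{F}_4$ given by $f_i(x) = {\bf c}_{1i}x + {\bf c}_{2i}x^2$. Because $\dim_{\mathbb{F}_2}\mathbb{F}_4 = 2$, one has $\mathrm{rank}(M_i) = 2 - \dim_{\mathbb{F}_2}\ker(f_i)$, so it suffices to describe the kernel in each case.

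I would then dispatch the four cases. If ${\bf c}_{1i} = {\bf c}_{2i} = 0$, then $f_i \equiv 0$ and $\mathrm{rank}(M_i) = 0$. If exactly one of ${\bf c}_{1i}, {\bf c}_{2i}$ is nonzero, then $f_i$ is either multiplication by a nonzero constant or the Frobenius composed with such a multiplication, both of which are bijections of $\mathbb{F}_4$, so $\mathrm{rank}(M_i) = 2$. If both ${\bf c}_{1i}, {\bf c}_{2i}$ are nonzero, factor $f_i(x) = x({\bf c}_{1i} + {\bf c}_{2i}x)$; the two roots in $\mathbb{F}_4$ are $0$ and ${\bf c}_{1i}/{\bf c}_{2i} \in \mathbb{F}_4^{\times}$, so $\ker(f_i)$ is a one-dimensional $\mathbb{F}_2$-subspace and $\mathrm{rank}(M_i) = 1$.

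Partitioning $\{1,\ldots,t\}$ according to which of ${\bf c}_{1i}, {\bf c}_{2i}$ vanish, the positions in $I$ each contribute $1$ to $wt_{sr}({\bf c})$, the positions in the symmetric difference $\bigl(\mathrm{supp}({\bf c}_1)\cup\mathrm{supp}({\bf c}_2)\bigr)\setminus I$ each contribute $2$, and the remaining positions contribute $0$. Using $|\mathrm{supp}({\bf c}_1)\cup\mathrm{supp}({\bf c}_2)| = wt_H({\bf c}_1) + wt_H({\bf c}_2) - |I|$, the size of the symmetric difference equals $wt_H({\bf c}_1) + wt_H({\bf c}_2) - 2|I|$. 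Summing yields
\[
wt_{sr}({\bf c}) \;=\; 2\bigl(wt_H({\bf c}_1) + wt_H({\bf c}_2) - 2|I|\bigr) + |I| \;=\; 2wt_H({\bf c}_1) + 2wt_H({\bf c}_2) - 3|I|,
\]
as claimed.

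The only mildly delicate point is the rank-one case, where one must verify that the quadratic $x({\bf c}_{1i} + {\bf c}_{2i}x)$ really has both roots inside $\mathbb{F}_4$; this is automatic since the unique nonzero root ${\bf c}_{1i}/{\bf c}_{2i}$ already lies in $\mathbb{F}_4^{\times}$ and is distinct from $0$. No deeper obstacle arises: the whole argument is a careful bookkeeping of block ranks followed by an inclusion-exclusion count on the supports.
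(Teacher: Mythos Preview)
Your proof is correct. The paper does not give its own proof of this lemma; it is quoted verbatim as \cite[Theorem 2.2]{Chen2025} and used as a black box, so there is nothing in the paper to compare against beyond the statement itself. Your block-by-block case analysis (rank $0$, $2$, $2$, $1$ according to which of ${\bf c}_{1i},{\bf c}_{2i}$ vanish) followed by the support inclusion--exclusion is exactly the natural argument and matches the formula claimed.
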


From the above results, we can obtain the quasi-perfect binary sum-rank codes with matrix size $2\times 2$,

\begin{theorem}\label{quasi-perfect sum-rank codes with 22}
    Let $\C_1\subset {\mathbb{F}_4^{t}}$ be a trivial linear code with parameters $[t,t-1,2]_4$, $\C_2\subset {\mathbb{F}_4^t}$ be a linear code with parameters $[t,k,4]_4$ and covering radius two. Then a binary quasi-perfect sum-rank code $SR(\C_1,\C_2)$ with block length $t$, matrix size $2\times 2$, minimum distance $4$ and covering radius two  is constructed explicitly.
\end{theorem}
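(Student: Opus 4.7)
Since $\lfloor(4-1)/2\rfloor+1=2$, the claim reduces to verifying $d_{sr}(SR(\C_1,\C_2))=4$ and $R_{sr}(SR(\C_1,\C_2))=2$. Both follow from the weight formula of Lemma~\ref{weight of 22 sum-rank codes}, $wt_{sr}((\mathbf{u}_1,\mathbf{u}_2))=2\,wt_H(\mathbf{u}_1)+2\,wt_H(\mathbf{u}_2)-3|I|$ with $I=supp(\mathbf{u}_1)\cap supp(\mathbf{u}_2)$; as its derivation shows, this applies to any pair $(\mathbf{u}_1,\mathbf{u}_2)\in\mathbb{F}_4^t\times\mathbb{F}_4^t$ regarded as an element of $\mathbb{F}_2^{(2,2),\dots,(2,2)}$ via the $q$-polynomial identification. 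For the minimum distance I would split cases on which of $\mathbf{c}_1\in\C_1,\mathbf{c}_2\in\C_2$ is zero: the two one-zero cases give $wt_{sr}\ge 8$ and $wt_{sr}\ge 4$; when both are nonzero with $a=wt_H(\mathbf{c}_1)\ge 2$ and $b=wt_H(\mathbf{c}_2)\ge 4$, the expression is minimized at $|I|=\min(a,b)$, giving $2b-a\ge b\ge 4$ if $a\le b$ and $2a-b\ge a\ge 5$ if $a>b$. Hence $d_{sr}=4$, with equality realized by any codeword having $\mathbf{c}_2=0$ and $\mathbf{c}_1$ of weight $2$.

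For the covering radius, fix any $\mathbf{v}\in\mathbb{F}_2^{(2,2),\dots,(2,2)}$ represented by $(\mathbf{v}_1,\mathbf{v}_2)$. Using $R_H(\C_2)=2$ I would pick $\mathbf{c}_2\in\C_2$ with $\mathbf{e}_2:=\mathbf{v}_2-\mathbf{c}_2$ of Hamming weight $b\le 2$ and support $S_2$. The key step is the choice of $\mathbf{c}_1\in\C_1$: instead of using the minimum-weight coset leader of $\mathbf{v}_1+\C_1$, I would select $\mathbf{e}_1:=\mathbf{v}_1-\mathbf{c}_1$ supported inside $S_2$ whenever possible, exploiting that $\C_1$ is a single parity-check code so $\mathbf{e}_1$ is free subject only to $\sum_i e_{1i}=\sigma$ for $\sigma:=\sum_i v_{1i}$. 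If $b=0$, the natural coset leader gives $wt_{sr}(\mathbf{e})\le 2$. If $b=1$ with $S_2=\{j\}$, take $\mathbf{e}_1=0$ when $\sigma=0$ or $\mathbf{e}_1=\sigma\mathbf{e}_j$ otherwise; in either case $wt_{sr}\le 2$. If $b=2$ with $S_2=\{j_1,j_2\}$, I would instead take a weight-$2$ error $\mathbf{e}_1$ supported exactly on $S_2$: set $e_{1,j_1}=e_{1,j_2}=\alpha\ne 0$ when $\sigma=0$ (using characteristic $2$), or pick $e_{1,j_1}\in\mathbb{F}_4\setminus\{0,\sigma\}$ and $e_{1,j_2}=\sigma-e_{1,j_1}\ne 0$ when $\sigma\ne 0$, forcing $|I|=2$ and $wt_{sr}=4+4-6=2$.

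This yields $R_{sr}\le 2$. For the matching lower bound, any $\mathbf{w}\in\mathbb{F}_2^{(2,2),\dots,(2,2)}$ of sum-rank weight exactly $2$ (for instance, rank-$1$ matrices placed in two distinct blocks) lies at sum-rank distance $\ge 2$ from every codeword: a codeword within distance $1$ of $\mathbf{w}$ would, by the triangle inequality, have weight $\le 3$ and hence, by $d_{sr}=4$, be zero, contradicting $d_{sr}(\mathbf{w},\mathbf{0})=2$. Thus $R_{sr}=2$ and the code is quasi-perfect. The main obstacle is the $b=2$ case of the covering: one must resist using the natural weight-$\le 1$ coset leader in $\mathbf{v}_1+\C_1$ and instead spread the parity across $S_2$, so that the rank-$1$ collapse at the two doubly occupied positions (captured by the $-3|I|$ term) absorbs the extra weight coming from $\mathbf{e}_2$.
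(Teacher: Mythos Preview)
Your argument is correct and follows the same core idea as the paper: use the weight formula of Lemma~\ref{weight of 22 sum-rank codes} both to bound $d_{sr}$ from below and to show $R_{sr}\le 2$ by choosing the $\C_1$-error so that its support matches that of the $\C_2$-error, so that the $-3|I|$ term absorbs the excess. Your write-up is in fact more complete than the paper's: the paper only treats the case $wt_H(\mathbf{v}_2-\mathbf{c}_2)=2$ (tacitly ignoring $b\in\{0,1\}$), merely asserts that one can place the weight-$2$ $\C_1$-error on any prescribed pair of positions without spelling out the $\sigma=0$ versus $\sigma\neq 0$ split you give, and states $d_{sr}=4$ and the quasi-perfect conclusion without the explicit case analysis or the lower bound $R_{sr}\ge 2$ that you supply via the triangle inequality.
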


\begin{proof}
    We first prove that this sum-rank code $SR(\C_1,\C_2)$ is quasi-perfect. We consider  $\C_1$ as a covering code with radius one. Therefore, it is obvious that there exists a codeword ${\bf c}_1\in \C_1$, such that $wt_{H}({\bf v}_1-{\bf c}_1)=2$ for any vector ${\bf v}_1\in {\mathbb{F}_{q^m}^t}$, and these two nonzero positions can be at any position in $\{1,2,\dots,t\}$. Moreover, $\C_2$ has covering radius two in the Hamming metric. Hence, for any vector ${\bf v}_2\in {\mathbb{F}_{q^m}^t}$ , there exists a codeword ${\bf c}_2\in \C_2$ such that $wt_H({\bf v}_2-{\bf c}_2)=2$. Without loss of generality, we assume that $j_1$-th and $j_2$-th position of vector ${\bf v}_2-{\bf c}_2$ is not zero, denoted by $({\bf v}_2-{\bf c}_2)_{j_i}\neq 0,i=1,2$. We assume that $({\bf v}_1-{\bf c}_1)_{j_i}\neq 0, i=1,2$. Therefore, for any ${\bf v}=(M_1,M_2,\dots,M_t)\in {\mathbb{F}_q^{(2,2),\dots,(2,2)}}$, there exists a codeword ${\bf c}=(N_1,N_2,\dots,N_t)\in SR(\C_1,\C_2)$, where $M_i$ and $N_i$ denote the matrix corresponding to  $q$-polynomial
    \[f_i(x)={\bf v}_{1i}\phi(x)+{\bf v}_{2i}\phi(x^{q}),\ g_i(x)={\bf c}_{1i}\phi(x)+{\bf c}_{2i}\phi(x^{q}),  i=1,\dots,t\]
    respectively, such that ${\bf v}-{\bf c}$ has two nonzero positions. By Lemma \ref{weight of 22 sum-rank codes}, since $supp({\bf v}_1-{\bf c}_1)=supp({\bf v}_2-{\bf c}_2)$, we have
    \[
    wt_{sr}({\bf v}-{\bf c})=2wt_H({\bf v_1}-{\bf c_1})+2wt_H({\bf v}_2-{\bf c}_2)-3|I|=2,\ R_{sr}(SR(\C_1,\C_2))\leq 2.
    \]
    Therefore, all balls with radius two centered at all codewords in $SR(\C_1,\C_2)$ cover the whole space, then $d_{sr}(SR(\C_1,\C_2))=4$ and $SR(\C_1,\C_2)$ is a quasi-perfect sum-rank code.
\end{proof}

\begin{remark}
    From Theorem \ref{quasi-perfect sum-rank codes with 22}, many quasi-perfect binary sum-rank codes with the matrix size $2\times 2$ and the minimum sum-rank distance four can be constructed explicitly from these quasi-perfect codes over ${\mathbb{F}_4}$ constructed in \cite{Giulietti2007}.
\end{remark}

\section{Constructions of distance-optimal $q$-ary sum-rank codes }

In this section, we give some constructions of distance-optimal $q$-ary codes in the sum-metric Firstly, we recall some basic concepts on cyclic codes. A linear code $\C$ is called cyclic if $(c_0,c_1,\dots,c_{n-1})\in \C$, then $(c_{n-1},c_0,\dots,c_{n-2})\in \C$. A codeword ${\bf c}$ in a cyclic code is identified with a polynomial ${\bf c}(x)=c_0+c_1x+\cdots+c_{n-1}x^{n-1}\in {\mathbb{F}_q}[x]/(x^n-1)$. Every cyclic code is a principal ideal in the ring ${\mathbb{F}_q}[x]/(x^n-1)$ and generated by a factor $g(x)$ of $x^n-1$. Let $n$ be a positive integer satisfying ${\rm gcd}(n,q)=1$ and ${\mathbb{Z}}_n={\mathbb{Z}}/n{\mathbb{Z}}=\{0,1,\dots,n-1\}$ be the residue classes modulo $n$. A subset $C_i$ of ${\mathbb{Z}}_n$ is called a $q$-cyclotomic coset if $C_i=\{i,iq,\dots,iq^{\ell-1}\}$, where $i\in {\mathbb{Z}}_n$ and $\ell$ is the smallest positive integer such that $iq^{\ell}\equiv i\pmod n$. Then each $q$-cyclotomic coset modulo $n$ corresponds to an irreducible factor of $x^n-1$ over ${\mathbb{F}_q}[x]$. A
generator polynomial of a cyclic code is the product of several irreducible factors of $x^n-1$. The defining set of a cyclic code generated by $g(x)$ is the the following set
\[
T=\{i\,:\, g(\beta^i)=0,\}
\]
where $\beta$ is a primitive $n$-th unity root of ${\mathbb{F}_{q^m}}$, $m={\rm ord}_n(q)$. Hence the defining set of a cyclic code is the disjoint union of several $q$-cyclotomic cosets. The famous BCH bound asserts that if there are $\delta-1$ consecutive elements in the defining set of a cyclic code, then the minimum distance of this cyclic code is at least $\delta$, see \cite{MacWilliams1977,Huffman2003,Vanlint1999}.

We consider a class of distance-optimal cyclic codes with minimum distance four. Let $q$ be a prime power with $q\geq 4$ and $n=\frac{q^m-1}{\lambda}$, where $m$ is a positive integer and $\lambda$ is a divisor of $q^m-1$. Then every $q$-cyclotomic coset in ${\mathbb{Z}}_n$ has at most $m$ elements.  Then we can construct a class of distance-optimal cyclic codes given as follows:

\begin{theorem}\label{distance-optimal cyclic code in the Hamming metric}
    Let $q\geq 4$ be a prime power, $m$ be a positive integer. If $\lambda$ is a divisor of $q^m-1$ satisfying
    \[\lambda< \frac{q-1}{\sqrt{2q(1+\epsilon)}},\]
    where $\epsilon$  is an arbitrary small positive real number. Then a distance-optimal cyclic code with parameters $[\frac{q^m-1}{\lambda},\frac{q^m-1}{\lambda}-2m-1,4]_q$ is constructed, when $m$ is sufficiently large.
\end{theorem}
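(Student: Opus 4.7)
The plan is to define $\C$ as the cyclic code of length $n = (q^m-1)/\lambda$ with defining set $T = C_0 \cup C_1 \cup C_{-1}$, where $C_i$ denotes the $q$-cyclotomic coset of $i$ modulo $n$. To justify the dimension $n - 2m - 1$, I need $|T| = 2m + 1$, which reduces to three sub-claims: (i) $\mathrm{ord}_n(q) = m$; (ii) $C_1 \cap C_{-1} = \emptyset$; (iii) $0 \notin C_1 \cup C_{-1}$. For (i), if $q^d \equiv 1 \pmod n$ with $d < m$ then $n \mid q^d - 1$, so $(q^m-1)/\lambda \leq q^d - 1$ forces $q^{m-d} \leq \lambda$; since the hypothesis implies $\lambda < q$, this yields $d = m$, so $|C_1| = |C_{-1}| = m$. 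For (ii), $q^i \equiv -1 \pmod n$ would force $m \mid 2i$, and the only possibility with $0 \leq i < m$ is $i = m/2$ for even $m$; this in turn yields $n \leq q^{m/2}+1$, contradicting $n > q^{m/2}$ when $\lambda < q$ and $m$ is sufficiently large. Claim (iii) is immediate since $q^i \not\equiv 0 \pmod n$. Hence $\dim \C = n - 2m - 1$, and because $\{-1, 0, 1\} \subset T$ is a run of three consecutive residues, the BCH bound gives $d(\C) \geq 4$.

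The next step is to rule out the existence of any $[n, n-2m-1, 5]_q$ code via Hamming sphere-packing. If such a code $\C'$ existed, then
\[
V_H(q,2) \;=\; 1 + n(q-1) + \binom{n}{2}(q-1)^2 \;\leq\; q^{2m+1}.
\]
Substituting $n = (q^m-1)/\lambda$, the dominant term behaves like
\[
\binom{n}{2}(q-1)^2 \;\sim\; \frac{(q-1)^2}{2\lambda^2}\, q^{2m} \quad \text{as } m \to \infty,
\]
so the sphere-packing condition would force $\frac{(q-1)^2}{2\lambda^2} \leq q\bigl(1 + o(1)\bigr)$. However the hypothesis $\lambda < \frac{q-1}{\sqrt{2q(1+\epsilon)}}$ gives $\frac{(q-1)^2}{2\lambda^2} > q(1+\epsilon)$, a contradiction once $m$ is large enough that the $o(1)$ correction is dominated by $\epsilon$.

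Combining the two parts, $\C$ has $d(\C) \geq 4$, and since no code of the same length and dimension with $d \geq 5$ can exist, $\C$ must achieve $d(\C) = 4$ and is distance-optimal. The main obstacle is the cyclotomic-coset bookkeeping in the first step; the delicate point is the disjointness $C_1 \cap C_{-1} = \emptyset$ when $m$ is even, which rests on $\lambda$ being much smaller than $q^{m/2}$, a property automatically enforced by the hypothesis once $m$ is sufficiently large. The sphere-packing step is a routine asymptotic computation once the code parameters are established.
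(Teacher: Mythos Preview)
Your proof is correct and follows essentially the same strategy as the paper: construct a cyclic code whose defining set contains three consecutive residues (so the BCH bound gives $d \geq 4$), verify that the defining set has exactly $2m+1$ elements, and then use the Hamming sphere-packing bound to rule out any $[n,\,n-2m-1,\,5]_q$ code. The only difference is cosmetic: the paper takes $T = C_0 \cup C_1 \cup C_2$ while you take $T = C_0 \cup C_1 \cup C_{-1}$. Your cyclotomic-coset bookkeeping --- in particular the explicit disjointness check $C_1 \cap C_{-1} = \emptyset$ --- is actually more carefully justified than the paper's, which records $|C_1| = |C_2| = m$ but leaves the disjointness of $C_1$ and $C_2$ implicit.
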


\begin{proof}
    We consider the defining set $T=C_0\cup C_1\cup C_2$, then $T$ has at most $2m+1$ elements. Since $0,1,2$ are in $T$,  a cyclic code with parameters $[\frac{q^m-1}{\lambda}, \geq \frac{q^m-1}{\lambda}-2m-1,\geq 4]$ is constructed. It is known that $\lambda< \frac{q-1}{\sqrt{2q(1+\epsilon)}}$, and $C_1=\{1,q,\dots,q^{\ell-1}\}$, where $\ell$ is smallest positive integer such that $q^{\ell}\equiv 1 \pmod n$. It is easy to verify that $\ell=m$ if $\lambda< \frac{q-1}{\sqrt{2q(1+\epsilon)}}$, therefore $|C_1|=m$. Similarly, $|C_2|=m$, then a cyclic code with parameters $[\frac{q^m-1}{\lambda}, \frac{q^m-1}{\lambda}-2m-1,\geq 4]$ is constructed.

    Next, we prove that this class of cyclic codes is distance-optimal with minimum distance $4$. It is sufficient to prove that
    \[V_H(q,\left\lfloor \frac{d_H}{2}\right\rfloor )>q^{2m+1}.\]
    It is known that the volume of the ball with radius $2$ in the Hamming metric space ${\mathbb{F}_q^n}$ is
    \[
    V_H(q,2)=\sum\limits_{i=0}^2\left(\begin{array}{c}
    n \\
    i
    \end{array}\right)(q-1)^i>\frac{n(n-1)(q-1)^2}{2}.
    \]
    Since $\lambda< \frac{q-1}{\sqrt{2q(1+\epsilon)}}$, we have
    \[
    \frac{n(n-1)(q-1)^2}{2}>n(n-1)\lambda^2(1+\epsilon)q=q^{2m+1}(1-\frac{1}{q^m})(1-\frac{\lambda+1}{q^m})(1+\epsilon)>q^{2m+1},
    \]
    when $m$ is sufficiently large. Hence, a distance-optimal cyclic code with parameters $[\frac{q^m-1}{\lambda}, \frac{q^m-1}{\lambda}-2m-1, 4]$ is constructed.

\end{proof}

\begin{remark}
    If $\lambda=1$, we can construct an infinite family of distance-optimal cyclic  codes with parameters $[q^m-1,q^m-1-2m-1,4]$ for any prime power q. These distance-optimal cyclic codes have
    the same parameters as distance-optimal cyclic codes in \cite{Ding2013,Wu2023,Yuan2006}. In addition, we can obtain an infinite family of distance-optimal cyclic codes with parameters $[\frac{5^{2m}-1}{3},\frac{5^{2m}-1}{3}-2m-1,4]$. It seems this is a family of distance-optimal cyclic codes with new parameters.
\end{remark}

\begin{theorem}
    A distance-optimal ternary cyclic code with parameters $[3^m-1,3^m-2m-2,4]$ and defining set $C_0\cup C_1\cup C_5$ is constructed. A distance-optimal quinary cyclic code with parameters $[5^m-1,5^m-2m-2,4]$ and defining set $C_0\cup C_1\cup C_3$ is constructed.
\end{theorem}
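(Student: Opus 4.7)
The plan is to imitate the argument of Theorem \ref{distance-optimal cyclic code in the Hamming metric} with $\lambda = 1$, but with defining sets adapted to the small alphabets $q = 3$ and $q = 5$. The substitution is forced because the growth condition $\lambda < (q-1)/\sqrt{2q(1+\epsilon)}$ of that theorem degenerates when $q=3,\lambda = 1$, so the BCH-triple defining set $C_0 \cup C_1 \cup C_2$ used previously cannot deliver distance-optimality in this regime. Instead I would take $T_3 = C_0 \cup C_1 \cup C_5$ for $q = 3$ and $T_5 = C_0 \cup C_1 \cup C_3$ for $q = 5$.

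First, I would pin down the dimension. Since $C_0 = \{0\}$ has size $1$, it suffices to show $|C_1| = |C_r| = m$ for $r \in \{3, 5\}$; this holds once $m$ is large enough that the $q$-orbits of $1$ and of $r$ modulo $q^m - 1$ are both of full length $m$. This yields $|T| = 2m + 1$ and the claimed dimension $q^m - 2m - 2$.

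Second, I would prove $d \ge 4$. The pair $\{0, 1\} \subset T$ only yields $d \ge 3$ via BCH, so weight-$3$ codewords must be ruled out directly. A weight-$3$ codeword at positions $i, j, k$ with values $a, b, c$ satisfies
\[
\begin{pmatrix}1 & 1 & 1\\ x & y & z\\ x^r & y^r & z^r\end{pmatrix}\begin{pmatrix}a\\ b\\ c\end{pmatrix}=\mathbf{0},
\]
with $x = \beta^i,\, y = \beta^j,\, z = \beta^k$ distinct $n$-th roots of unity and $r \in \{3, 5\}$. Direct determinantal expansion factors this determinant as $\pm(x-y)(y-z)(z-x)\cdot f(x, y, z)$, where $f = x + y + z$ for $r = 3$ and $f$ equals the complete homogeneous symmetric polynomial $h_3(x, y, z)$ for $r = 5$. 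I would then show that even on the triples where $f$ vanishes, the resulting one-dimensional solution space of the singular system contains no nontrivial vector with entries in $\mathbb{F}_q$; this is a Galois-rationality check exploiting that such a solution would force the support coordinates to be $\mathbb{F}_q$-conjugates. A matching $d \le 4$ comes from exhibiting an explicit weight-$4$ codeword.

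Finally, distance-optimality for the quinary case follows from sphere-packing exactly as in Theorem \ref{distance-optimal cyclic code in the Hamming metric}: $V_H(5, 2) = 8n^2 - 4n + 1 > 5^{2m+1}$ for $m \ge 2$ rules out any $[5^m - 1, 5^m - 2m - 2, 5]_5$ code. The ternary case is the main obstacle: $V_H(3, 2) = 2 \cdot 3^{2m} - 4 \cdot 3^m + 3 < 3^{2m+1}$, so plain sphere-packing is inconclusive, and indeed the shortened ternary Golay code yields $[8, 3, 5]_3$, obstructing the $m = 2$ instance. To push the argument through for $m$ past a small threshold, I would sharpen sphere-packing by combining it with the classification of perfect non-binary codes (forcing any hypothetical $[3^m - 1, 3^m - 2m - 2, 5]_3$ code not to be perfect, so that a strict inequality absorbs the missing factor), or fall back on a computer-assisted linear-programming bound in the target range.
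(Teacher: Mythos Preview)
Your plan diverges from the paper in both halves of the argument, and the balance sheet is mixed.

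\textbf{Minimum distance $d\ge 4$.} The paper does not argue via determinants at all. It observes that in the ternary case $3\in C_1$ (since $C_1=\{1,3,9,\ldots\}$ under multiplication by $q=3$), and in the quinary case $5\in C_1$; hence in both cases the defining set contains $\{0,1,3,5\}$, and the Boston bound \cite{Boston2001,Zeh2012} then gives $d\ge 4$ in one stroke. Your determinantal route may be salvageable, but the ``Galois-rationality check'' you outline is the hard part and you have not carried it out: when $f(x,y,z)=0$ the kernel is spanned by $(y-z,z-x,x-y)$, and showing this never has an $\mathbb{F}_q$-rational representative for distinct $x,y,z\in\mathbb{F}_{q^m}^{\ast}$ (while also respecting the remaining parity checks coming from the rest of $C_1$ and $C_r$) is not a one-line observation. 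The paper's appeal to a known bound is both shorter and complete.

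\textbf{Distance-optimality.} Here you are \emph{more} careful than the paper, which simply writes ``Similar to Theorem~\ref{distance-optimal cyclic code in the Hamming metric}''---i.e., sphere packing. You correctly verify that this works for $q=5$ and correctly note that it fails for $q=3$: with $n=3^m-1$ one has $V_H(3,2)=2n^2+1<3^{2m+1}$, so sphere packing does not exclude a $[3^m-1,\,3^m-2m-2,\,5]_3$ code. Your $m=2$ observation is decisive: the shortened ternary Golay code gives a genuine $[8,3,5]_3$ code, so the ternary statement is simply false at $m=2$, and the paper's proof sketch does not establish it for larger $m$ either. However, your proposed repairs do not close the gap. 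The classification of perfect codes is irrelevant here (you would need strict inequality \emph{the other way} to get a contradiction; what you have is strict inequality in the direction compatible with existence), and ``fall back on an LP bound'' is not a proof. So on the ternary side, neither the paper's argument nor yours establishes distance-optimality; you have, in effect, located a gap in the paper rather than filled it.
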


\begin{proof}
    It is known that the minimum distance is at least $4$ if $0,1,3,5$ are in the defining set of $q$-ary cyclic codes by Boston bound \cite{Boston2001,Zeh2012}. Then  these two families of cyclic codes have minimum distance at least $4$. Similar to Theorem \ref{distance-optimal cyclic code in the Hamming metric}, we can prove that they are distance-optimal.
\end{proof}

\subsection{Distance-optimal $q$-ary codes in the sum-rank metric with matrix size $s\times s$}

In this subsection, we construct distance-optimal sum-rank codes with matrix size $s\times s$ from distance-optimal cyclic codes in the Hamming metric.

\begin{theorem}\label{distance-optimal cyclic sum-rank code with matrix ss}
    Let $q$ be a prime power, $s$ and $m$ be fixed positive integers. If $\lambda$ is a divisor of $q^{sm}-1$ satisfying
    \[\lambda<\sqrt{\frac{q^s-1}{2(q-1)^2(1+\epsilon)}},\]
    where $\epsilon$ is an arbitrary small positive real number. Then a distance-optimal code in the sum-rank metric with block length $t=\frac{q^{sm}-1}{\lambda}$, matrix size $s\times s$, the cardinality $q^{s^2t-s(2m+3)}$ and minimum sum-rank distance 4 is constructed.
\end{theorem}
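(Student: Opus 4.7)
The plan is to reduce the sum-rank construction to a Hamming-metric cyclic code over the auxiliary field $\mathbb{F}_{q^s}$ and then lift via the $SR$ construction of \cite{Chen2023} recalled in this section. Writing $t=(q^{sm}-1)/\lambda=((q^s)^m-1)/\lambda$ exhibits $t$ as an admissible length for a $q^s$-ary cyclic code whose $q^s$-cyclotomic cosets modulo $t$ have the expected size $m$.

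First, I would build a $q^s$-ary cyclic code $\C^{*}\subset \mathbb{F}_{q^s}^{t}$ of length $t$ whose defining set is $C_0\cup C_1\cup C_2$, the three $q^s$-cyclotomic cosets modulo $t$ containing $0,1,2$. The hypothesis on $\lambda$ forces $\lambda<q^{s}-1$, which in turn guarantees $|C_1|=|C_2|=m$; the BCH bound then yields parameters $[t,\, t-2m-1,\, \geq 4]_{q^s}$, the natural $\mathbb{F}_{q^s}$-analog of the Hamming-metric construction underlying Theorem~\ref{distance-optimal cyclic code in the Hamming metric}.

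Next, I would form the sum-rank code $\C=SR(\C_1,\ldots,\C_s)\subset \mathbb{F}_q^{(s,s),\ldots,(s,s)}$ of matrix size $s\times s$ and block length $t$ by taking $\C_1=\C^{*}$ together with auxiliary $q^s$-ary cyclic codes $\C_2,\ldots,\C_s$ chosen so that $\sum_{i=1}^{s}\dim_{\mathbb{F}_{q^s}}\C_i = st-(2m+3)$ and the minimum distances satisfy $i\,d_i\geq 4$. Concretely, for $s=2$ take $\C_2$ to be a $[t,t-2,\geq 2]_{q^s}$ cyclic code; for $s=3$ take $\C_2,\C_3$ each as a parity-check code $[t,t-1,2]_{q^s}$; for $s\geq 4$ take $\C_2,\C_3$ as parity-check codes and $\C_4,\ldots,\C_s$ as the full space $\mathbb{F}_{q^s}^t$. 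By the distance bound $d_{sr}(SR(\C_1,\ldots,\C_s))\geq \min\{d_1,2d_2,\ldots,sd_s\}$ recalled from \cite{Chen2023}, one obtains $d_{sr}(\C)\geq 4$ in every case, while $|\C|=\prod_i|\C_i|=q^{s^2t-s(2m+3)}$.

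Finally, I would verify distance-optimality via the sum-rank sphere packing bound: if there were a code $\C'$ in the same ambient space with $|\C'|=|\C|$ and $d_{sr}(\C')\geq 5$, then $|\C'|\cdot V_{sr}(q,2)\leq q^{s^2t}$. Using the lower bound $V_{sr}(q,2)\geq t(t-1)(q^s-1)^4/(2(q-1)^2)$ from Lemma~\ref{volume of ball with radius 2 in the sum-rank metric}, substituting $t=(q^{sm}-1)/\lambda$, and invoking the hypothesis $\lambda^2<(q^s-1)/(2(q-1)^2(1+\epsilon))$, a short asymptotic estimate (analogous to the one in the proof of Theorem~\ref{distance-optimal cyclic code in the Hamming metric}, with the Hamming volume replaced by its sum-rank counterpart from Lemma~\ref{the number of rank 1}) contradicts this inequality for $m$ sufficiently large, so no such $\C'$ exists. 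The main obstacle will be arranging the auxiliary codes $\C_2,\ldots,\C_s$ so that the target cardinality is hit exactly while preserving $d_{sr}(\C)\geq 4$; this is why a mild case analysis on $s$ is needed. Once the construction is settled, the sphere packing check is a direct asymptotic computation mirroring the Hamming case.
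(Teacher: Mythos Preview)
Your proposal is correct and follows essentially the same route as the paper: build a $[t,t-2m-1,\geq 4]_{q^s}$ cyclic code via the defining set $C_0\cup C_1\cup C_2$, pair it with parity-check and full-space codes inside the $SR$ construction to hit dimension $s^2t-s(2m+3)$ and $d_{sr}\geq 4$, then rule out $d_{sr}\geq 5$ via the sphere-packing bound combined with Lemma~\ref{volume of ball with radius 2 in the sum-rank metric}. Your case split on $s$ is in fact slightly more careful than the paper's proof, which takes $\C_2,\C_3$ as $[t,t-1,2]_{q^s}$ and $\C_4,\ldots,\C_s$ as $[t,t,1]_{q^s}$ uniformly and thus, as written, only matches the stated cardinality for $s\geq 3$; your separate handling of $s=2$ with a $[t,t-2,\geq 2]_{q^s}$ code closes that gap (and note that the $SR$ construction does not require the auxiliary codes to be cyclic, so any linear code with those parameters suffices).
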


\begin{proof}
    Let $\C_1$ be a cyclic code with parameters $[t,t-2m-1,4]_{q^s}$ with defining set $T=C_0\cup C_1\cup C_2$ given in Theorem \ref{distance-optimal cyclic code in the Hamming metric}, $\C_2,\C_3$ be trivial cyclic codes with parameters $[t,t-1,2]_{q^s}$ and $\C_i,i=4,\dots,s$ be trivial cyclic codes with parameters $[t,t,1]_{q^s}$. Then we have $SR(\C_1,\C_2,\dots,\C_s)$ is a sum-rank code and it has block length  $t=\frac{q^{sm}-1}{\lambda}$ and
    \[d_{sr}(SR(\C_1,\dots,\C_s))\geq {\rm min}\{d_1,2d_2,\dots,sd_s\}=4,\ {\rm dim}_{\mathbb{F}_{q}}(SR(\C_1,\dots,\C_s))=s^2t-s(2m+3).\]
    Therefore, $|SR(\C_1,\dots,\C_s)|=q^{s^2t-s(2m+3)}$. Then we prove this cyclic sum-rank code is distance-optimal with minimum distance 4, it is sufficient to prove that
    \[V_{sr}(q,2)>q^{s(2m+3)}.\]
    By Lemma \ref{volume of ball with radius 2 in the sum-rank metric}, we have
    \[V_{sr}(q,2)\geq \frac{t(t-1)(q^s-1)^4}{2(q-1)^2}>\frac{(t-1)^2(q^s-1)^4}{2(q-1)^2}.\]
    It is sufficient to prove that
    \[\lambda<\sqrt{\frac{(q^s-1)(1-\frac{1+\lambda}{q^{sm}})(1-\frac{1}{q^s})^3}{2(q-1)^2}}<\sqrt{\frac{q^s-1}{2(q-1)^2(1+\epsilon)}},
    \]
    when $m$ is sufficiently large. The conclusion follows immediately.
\end{proof}

It is obvious that infinitely many families of distance-optimal cyclic sum-rank codes
can be obtained from Theorem \ref{distance-optimal cyclic sum-rank code with matrix ss}. If $s_1<s_2$, and $\lambda$ is a divisor of $q^{s_2m}-1$,  we consider cyclic codes $\C_i,i=1,2,\dots, s_1$ over ${\mathbb{F}_{q^{s_2}}}$ and sum-rank codes $SR(\C_1,\dots,\C_{s_1})$. Then we can derive the following results.

\begin{cor}
    Let $q$ be a prime power, $s_1<s_2$ and m be fixed positive integers. If $\lambda$ is a divisor of $q^{s_2m}-1$ satisfying
    \[
    \lambda<\frac{q^{s_1}-1}{q-1}\sqrt{\frac{1}{2(1+\epsilon)q^{s_2}}}.
    \]
    Then an infinite family of distance-optimal codes of the block length $t=\frac{q^{s^2m}-1}{\lambda}$, the matrix size $s_1\times s_2$ and the minimum distance 4 is constructed.
\end{cor}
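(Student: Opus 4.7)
The plan is to imitate the proof of Theorem \ref{distance-optimal cyclic sum-rank code with matrix ss} (the square $s\times s$ case), adapted to the rectangular setting. Working over the alphabet $\mathbb{F}_{q^{s_2}}$ with length $t=(q^{s_2 m}-1)/\lambda$, I would pick the component codes as follows: take $\C_1$ to be the distance-optimal cyclic code $[t,t-2m-1,4]_{q^{s_2}}$ produced by Theorem \ref{distance-optimal cyclic code in the Hamming metric} (applied with $q^{s_2}$ in place of $q$); take $\C_2$ and $\C_3$ to be trivial cyclic codes $[t,t-1,2]_{q^{s_2}}$; and for $4\le i\le s_1$ take $\C_i=[t,t,1]_{q^{s_2}}$. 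Forming $SR(\C_1,\ldots,\C_{s_1})\subset\mathbb{F}_q^{(s_1,s_2),\ldots,(s_1,s_2)}$, the distance bound recalled before Theorem \ref{distance-optimal cyclic sum-rank code with matrix ss} gives $d_{sr}\geq\min\{4,4,6,4,\ldots\}=4$, while the $\mathbb{F}_q$-dimension comes out to $s_2[(t-2m-1)+2(t-1)+(s_1-3)t]=s_1 s_2 t-s_2(2m+3)$, so $|SR(\C_1,\ldots,\C_{s_1})|=q^{s_1 s_2 t-s_2(2m+3)}$.

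Before proceeding I would verify that $\C_1$ actually exists, i.e., that the hypothesis of Theorem \ref{distance-optimal cyclic code in the Hamming metric} holds for base field $\mathbb{F}_{q^{s_2}}$: explicitly, $\lambda<(q^{s_2}-1)/\sqrt{2q^{s_2}(1+\epsilon)}$. Since $s_1<s_2$, the elementary inequality $(q^{s_1}-1)/(q-1)\leq q^{s_2}-1$ combined with the hypothesis of the corollary implies this. For distance optimality it suffices, by the sphere-packing bound in the sum-rank metric, to show
\[
V_{sr}(q,2)>q^{s_2(2m+3)},
\]
so that no code with $|SR(\C_1,\ldots,\C_{s_1})|$ codewords can have sum-rank distance $\geq 5$. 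Lemma \ref{the number of rank 1} gives $(q^{s_1}-1)(q^{s_2}-1)/(q-1)$ rank-one matrices in $\mathbb{F}_q^{(s_1,s_2)}$, and placing two such matrices in two distinct blocks yields
\[
V_{sr}(q,2)\geq\binom{t}{2}\left(\frac{(q^{s_1}-1)(q^{s_2}-1)}{q-1}\right)^2>\frac{(q^{s_2 m}-1-\lambda)^2(q^{s_1}-1)^2(q^{s_2}-1)^2}{2\lambda^2(q-1)^2}.
\]

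Substituting this lower bound into the desired inequality and dividing through by $q^{2s_2 m+3s_2}$ leaves the condition
\[
\lambda^2<\frac{(q^{s_1}-1)^2(1-q^{-s_2 m})^2(1-q^{-s_2})^2}{2(q-1)^2 q^{s_2}},
\]
and the main (and in fact only) remaining work is the bookkeeping step of showing that the factors $(1-q^{-s_2 m})^2$ and $(1-q^{-s_2})^2$ can be absorbed into the slack $1/(1+\epsilon)$ provided $m$ is sufficiently large; this is parallel to the corresponding estimate in Theorem \ref{distance-optimal cyclic sum-rank code with matrix ss}. The resulting condition rearranges exactly to the stated hypothesis $\lambda<\frac{q^{s_1}-1}{q-1}\sqrt{\frac{1}{2(1+\epsilon)q^{s_2}}}$, and letting $m\to\infty$ yields an infinite family, completing the argument.
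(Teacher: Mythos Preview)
Your proposal is correct and follows exactly the approach the paper indicates: the sentence preceding the corollary (``we consider cyclic codes $\C_i,\,i=1,\dots,s_1$ over $\mathbb{F}_{q^{s_2}}$ and sum-rank codes $SR(\C_1,\dots,\C_{s_1})$'') is the whole proof sketch the paper offers, and you have filled in the details in the same spirit as Theorem~\ref{distance-optimal cyclic sum-rank code with matrix ss}, using Lemma~\ref{the number of rank 1} in place of Lemma~\ref{volume of ball with radius 2 in the sum-rank metric} to handle the rectangular matrix size. The only minor caveat is that your dimension count and choice of $\C_2,\C_3$ implicitly assume $s_1\ge 3$; for $s_1\in\{1,2\}$ the codimension is smaller and the sphere-packing inequality is only easier, so the stated bound on $\lambda$ still suffices.
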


\subsection{Distance-optimal $q$-ary sum-rank codes with matrix size $2\times 2$}

In this subsection, we give the construction of distance-optimal sum-rank codes with matrix size $2\times 2$. Firstly, we have the following known results.

\begin{lem}\cite[Theorem 4.5.6]{Huffman2003}\label{HT bound}
    Let $\C$ be a cyclic code of length $n$ over ${\mathbb{F}_q}$ with defining set $T$. Let $A$ be a set of $\delta-1$ consecutive elements of $T$ and $B=\{jb \pmod n\,:\, 0\leq j\leq s\}$, where ${\rm gcd}(b,n)<\delta$. If $A+B\subseteq T$, then the minimum weight $d$ of $\C$ satisfies $d\geq \delta+s$.
\end{lem}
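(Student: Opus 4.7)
The plan is to prove this Hartmann--Tzeng-type bound by contradiction, reducing each case to an application of the BCH bound on a carefully twisted version of the putative codeword. Suppose for contradiction that there is a nonzero codeword $c(x)=\sum_{l=1}^{w}c_{l}x^{k_{l}}\in\C$ with $1\leq w\leq\delta+s-1$. Let $\beta\in\overline{\mathbb{F}_{q}}$ be a primitive $n$-th root of unity, let $a$ be the smallest element of $A$ so that $A=\{a,a+1,\ldots,a+\delta-2\}$, and set $\mu_{l}=\beta^{bk_{l}}$ for each $l$.

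The central construction is the following. For any polynomial $P(z)=\sum_{j=0}^{s}\lambda_{j}z^{j}$ of degree at most $s$, form the twisted word $\tilde{c}(x)=\sum_{l=1}^{w}c_{l}P(\mu_{l})x^{k_{l}}$. A direct computation using $A+B\subseteq T$ and $c(\beta^{t})=0$ for $t\in T$ gives $\tilde{c}(\beta^{a+i})=\sum_{j=0}^{s}\lambda_{j}c(\beta^{a+i+jb})=0$ for $i=0,1,\ldots,\delta-2$, so $\tilde{c}$ has $\delta-1$ consecutive spectral zeros. The BCH bound then yields the dichotomy: either $\tilde{c}=0$, equivalently $P(\mu_{l})=0$ for every $l$, or $\mathrm{wt}(\tilde{c})=|\{l:P(\mu_{l})\neq 0\}|\geq\delta$, equivalently $|\{l:P(\mu_{l})=0\}|\leq w-\delta\leq s-1$.

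The proof is finished by choosing $P$ appropriately. Let $\nu_{1},\ldots,\nu_{r}$ be the distinct values among $\mu_{1},\ldots,\mu_{w}$ with multiplicities $m_{1}\geq\cdots\geq m_{r}$ and $\sum m_{i}=w$. The hypothesis $\gcd(b,n)<\delta$ enters via the key observation that $\mu_{l}=\mu_{l'}$ if and only if $k_{l}\equiv k_{l'}\pmod{n/\gcd(b,n)}$, and each residue class modulo $n/\gcd(b,n)$ in $\{0,1,\ldots,n-1\}$ has exactly $\gcd(b,n)$ elements, so $m_{i}\leq\gcd(b,n)\leq\delta-1$ for every $i$. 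I would now split into three cases. If $r\geq s+1$, choose $P(z)=\prod_{i=1}^{s}(z-\nu_{i})$, of degree $s$ and not vanishing on $\nu_{s+1}$; then $|\{l:P(\mu_{l})=0\}|=m_{1}+\cdots+m_{s}\geq s$, contradicting the dichotomy. If $r=1$, then $w=m_{1}\leq\delta-1$, but the BCH bound applied directly to the $\delta-1$ consecutive zeros $\beta^{a},\ldots,\beta^{a+\delta-2}$ of $c$ forces $w\geq\delta$, a contradiction. Finally, if $2\leq r\leq s$, take $P(z)=\prod_{i=1}^{r-1}(z-\nu_{i})$, of degree $r-1<s$ and not vanishing on $\nu_{r}$; the dichotomy then gives $m_{1}+\cdots+m_{r-1}\leq w-\delta$, so $m_{r}\geq w-(w-\delta)=\delta$, again contradicting $m_{r}\leq\delta-1$.

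The main obstacle is the case analysis in the last paragraph: the crux is choosing the polynomial $P$ so that the BCH dichotomy interacts cleanly with the multiplicity cap $m_{i}\leq\gcd(b,n)$. The condition $\gcd(b,n)<\delta$ is doing double duty---it bounds the multiplicities from above and simultaneously guarantees enough distinct $\nu_{i}$'s to construct $P$ with the required degree constraints and the required non-vanishing at some $\nu_{i}$. Once the case split and the polynomial choices are in place, each branch terminates in a routine arithmetic contradiction, and the bound $d\geq\delta+s$ follows.
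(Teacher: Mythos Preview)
Your proof is correct and follows the classical argument for the Hartmann--Tzeng bound: twist the putative low-weight codeword by evaluations of a degree-$\leq s$ polynomial, use $A+B\subseteq T$ to produce $\delta-1$ consecutive spectral zeros, invoke the BCH/Vandermonde bound over the extension field to get the dichotomy, and finish with a case split governed by the multiplicity cap $m_i\leq\gcd(b,n)<\delta$. Each of the three cases is handled cleanly, and the use of $\gcd(b,n)<\delta$ is exactly right.

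As for comparison with the paper: there is nothing to compare. The paper does not prove this lemma at all; it is simply quoted from \cite[Theorem~4.5.6]{Huffman2003} and used as a black box. Your argument is in fact essentially the standard textbook proof (see, e.g., van~Lint or Huffman--Pless), so you have supplied what the paper deliberately omitted. One minor point worth making explicit in a write-up: the twisted word $\tilde{c}$ has coefficients $c_lP(\mu_l)$ lying in the splitting field $\mathbb{F}_{q^m}$ rather than $\mathbb{F}_q$, so the ``BCH bound'' you invoke is really the underlying Vandermonde argument, which is valid over any field containing $\beta$.
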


Let $q$ be a prime power, $n=q^2-1$ and the defining set $T=C_0\cup C_1 \cup C_{q+1}$. We consider the $q$-cyclotomic cosets in ${\mathbb{Z}_n}$ and cyclic code $\C$ with defining set $T$, there are four elements in $T$, and $T=\{0,1,q,q+1\}$. Set $A=\{0,1\}$ and $B=\{0,q\}$. Then by Lemma \ref{HT bound}, $d(\C)\geq 4$.

\begin{theorem}\label{distance-optimal sum-rank code with matrix size 22}
    Let $q$ be a prime power, a distance-optimal sum-rank code with block length $t=q^4-1$, matrix size $2\times 2$, and minimum sum-rank distance 4 is constructed.
\end{theorem}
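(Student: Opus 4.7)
The plan is to mimic the strategy of Theorem \ref{distance-optimal cyclic sum-rank code with matrix ss} but obtain the distance-$4$ constituent via the Hartmann--Tzeng bound rather than the classical BCH bound. Concretely, I would build two cyclic codes $\C_1,\C_2$ over $\mathbb{F}_{q^2}$ of length $t=q^4-1$ and then apply the $SR(\cdot,\cdot)$ construction to produce a $2\times 2$ sum-rank code whose codimension is small enough that the sphere-packing bound at radius $2$ forbids any same-size code with sum-rank distance $5$.

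First I would construct $\C_1$. Take the $q^2$-cyclotomic cosets modulo $n=q^4-1$: since $(q^2)^2\equiv 1\pmod{q^4-1}$, we have $C_0=\{0\}$, $C_1=\{1,q^2\}$, and $C_{q^2+1}=\{q^2+1\}$ (because $(q^2+1)q^2=q^4+q^2\equiv q^2+1$). Hence $T=C_0\cup C_1\cup C_{q^2+1}=\{0,1,q^2,q^2+1\}$ has exactly four elements. Applying Lemma \ref{HT bound} with $A=\{0,1\}$ (so $\delta=3$), $B=\{0,q^2\}$ (so $s=1$, $b=q^2$, and $\gcd(q^2,q^4-1)=1<\delta$), one checks $A+B\subseteq T$, hence $d(\C_1)\ge \delta+s=4$. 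This gives $\C_1=[q^4-1,\,q^4-5,\,\ge 4]_{q^2}$.

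Next I would let $\C_2$ be the trivial single-parity-check cyclic code with parameters $[q^4-1,q^4-2,2]_{q^2}$ (defining set $\{0\}$), and form $SR(\C_1,\C_2)$ over $\mathbb{F}_q^{(2,2),\ldots,(2,2)}$ with $t=q^4-1$ blocks. By the distance lemma used in the previous subsection,
\[
d_{sr}(SR(\C_1,\C_2))\ge \min\{d_1,\,2d_2\}=\min\{4,4\}=4,
\]
and the cardinality is $|\C_1|\cdot|\C_2|=q^{2(q^4-5)}\cdot q^{2(q^4-2)}=q^{4t-10}$, so $SR(\C_1,\C_2)$ has $\mathbb{F}_q$-codimension $10$ in the ambient space of $\mathbb{F}_q$-dimension $4t$.

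Finally, to prove distance-optimality it suffices to show that no code of the same cardinality can have sum-rank distance $5$, i.e.\ that the sphere-packing bound at radius $2$ is violated: $|SR(\C_1,\C_2)|\cdot V_{sr}(q,2)>q^{4t}$, equivalently $V_{sr}(q,2)>q^{10}$. Using Lemma \ref{volume of ball with radius 2 in the sum-rank metric},
\[
V_{sr}(q,2)\ \ge\ \frac{t(t-1)(q^2-1)^4}{2(q-1)^2}\ =\ \frac{(q^4-1)(q^4-2)(q-1)^2(q+1)^4}{2},
\]
whose right-hand side grows like $q^{13}/2$, comfortably exceeding $q^{10}$ for every prime power $q\ge 2$; the tightest case $q=2$ still gives $8505>1024$. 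The main obstacle in the proof is really only this last verification: one needs to check that the inequality holds uniformly for all prime powers (not merely asymptotically), which is a short elementary computation; the rest reduces to combining already-established tools (HT bound, the $SR$ distance lemma, and the volume estimate).
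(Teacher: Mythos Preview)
Your proposal is correct and follows essentially the same approach as the paper: the paper also builds $\C_1$ over $\mathbb{F}_{q^2}$ of length $q^4-1$ with defining set $\{0,1,q^2,q^2+1\}$ via the Hartmann--Tzeng bound (the paragraph preceding the theorem states this with $q$ in place of $q^2$), takes $\C_2=[q^4-1,q^4-2,2]_{q^2}$, and shows $V_{sr}(q,2)>q^{10}$ using the same volume lemma. Your explicit verification of the $q=2$ case and of the $\gcd$ condition are in fact slightly more careful than the paper's write-up.
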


\begin{proof}
    Let $\C_1$ be a cyclic code with parameters $[q^4-1,q^4-5,4]_{q^2}$ constructed as above, $\C_2$ be a trivial cyclic code with parameters $[q^4-1,q^4-2,2]_{q^2}$. Then $SR(\C_1,\C_2)\subset {\mathbb{F}_q^{(2,2),\dots,(2,2)}}$ is a cyclic sum-rank code with block length $t$ and matrix size $2\times 2$, and
    \[d_{sr}\geq {\rm min}\{d_1,2d_2 \}=4,\ {\rm dim}_{\mathbb{F}_q}(SR(\C_1,\C_2))=4t-10.\]
    It is sufficient to prove $V_{sr}(q,2)>q^{10}$ if $SR(\C-1,\C_2)$ is distance-optimal. By Lemma \ref{volume of ball with radius 2 in the sum-rank metric},
    \[
    V_{sr}(q,2)\geq \frac{t(t-1)(q^2-1)^4}{2(q-1)^2}>q^{13}.
    \]
    Therefore, $SR(\C_1,\C_2)$ is distance-optimal sum-rank code.
\end{proof}

\begin{remark}
    The sum-rank code constructed in Theorem \ref{distance-optimal sum-rank code with matrix size 22} has Singleton defect  $2(2t-4+1)-(4t-10)=4$. Infinitely many distance-optimal sum-rank codes, which are close to Singleton-like bound and have much larger $q^4-1\gg q-1$  block lengths, are constructed. These sum-rank codes are next best possibility to the almost MSRD codes with the Singleton defect 2. 
\end{remark}

Next we present an almost MSRD code with block length up to $q^2$,  matrix size $2\times 2$.
\begin{theorem}
    Let $q$ be a prime power, an almost MSRD code with block length $t$ up to $q^2$, matrix size $2\times 2$, minimum sum-rank distance four is constructed explicitly.
\end{theorem}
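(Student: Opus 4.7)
The plan is to reuse the $SR(\C_1,\C_2)$ construction from the previous sections, where $\C_1,\C_2\subset \mathbb{F}_{q^2}^t$ are linear codes over $\mathbb{F}_{q^2}$, and to pick these ingredients so that both the minimum-distance bound $\min\{d_1,2d_2\}$ and the dimension come out as close as possible to the Singleton-like bound.

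Concretely, I will take $\C_1$ to be a Reed--Solomon $[t,t-3,4]_{q^2}$ MDS code and $\C_2$ to be the trivial parity-check $[t,t-1,2]_{q^2}$ code. Since Reed--Solomon (or extended Reed--Solomon) codes over $\mathbb{F}_{q^2}$ exist for every length $3\leq t\leq q^2+1$, the range ``$t$ up to $q^2$'' is immediate. Then the sum-rank code $SR(\C_1,\C_2)\subset \mathbb{F}_q^{(2,2),\dots,(2,2)}$ has block length $t$ and matrix size $2\times 2$, with
\[
d_{sr}(SR(\C_1,\C_2))\geq \min\{d_1,2d_2\}=\min\{4,4\}=4,
\]
by the lemma of Chen2023 quoted in Section~4, and size $|\C_1|\cdot|\C_2|=q^{2(t-3)}\cdot q^{2(t-1)}=q^{4t-8}$.

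Finally I compare with the Singleton-like bound. For $m=2$, $N=2t$, $d_{sr}=4$ the bound is $|\C|\leq q^{m(N-d_{sr}+1)}=q^{4t-6}$, so the Singleton defect of $SR(\C_1,\C_2)$ equals $(4t-6)-(4t-8)=2$, confirming that the code is almost MSRD. This also forces $d_{sr}$ to be exactly $4$ (any strictly larger value would contradict the dimension), so the construction achieves all claimed parameters.

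There is essentially no hard step: the whole argument is a bookkeeping exercise once one picks the two ingredient codes. The only mildly delicate point is ensuring that the length constraint ``up to $q^2$'' is met, but this follows from standard MDS existence for codes of codimension $3$ over $\mathbb{F}_{q^2}$. If one wanted a cyclic variant (to match the flavour of the surrounding Section~5), one could instead take $\C_1$ to be the $[q^2-1,q^2-4,4]_{q^2}$ Reed--Solomon cyclic code, which still gives block length $t=q^2-1\sim q^2$ and the same Singleton defect.
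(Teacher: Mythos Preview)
Your construction and overall argument are exactly the paper's: take $\C_1$ a $[t,t-3,4]_{q^2}$ Reed--Solomon code and $\C_2$ the $[t,t-1,2]_{q^2}$ parity-check code, form $SR(\C_1,\C_2)$, and compute dimension $4t-8$ with $d_{sr}\geq 4$, giving Singleton defect $2$.

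One small slip: your sentence ``any strictly larger value would contradict the dimension'' does not actually rule out $d_{sr}=5$. With $d_{sr}=5$ the Singleton-like bound reads $|\C|\leq q^{2(2t-5+1)}=q^{4t-8}$, which matches your code's size exactly rather than contradicting it; only $d_{sr}\geq 6$ is excluded this way. The paper closes this by invoking the sphere-packing inequality $V_{sr}(q,2)>q^{8}$ to force $d_{sr}\leq 4$. An even quicker fix in your write-up is to exhibit a weight-$4$ codeword directly: take ${\bf c}_2=0$ and ${\bf c}_1$ a minimum-weight word of $\C_1$; then each nonzero block is the rank-$1$ map $x\mapsto {\bf c}_{1i}x$, so $wt_{sr}=4$.
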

\begin{proof}
    Let $\C_1$ be a Reed-Solomon code with parameters $[t,t-3,4]_{q^2}$, where $t\leq q^2$, $\C_2$ be a trivial code with parameters $[t,t-1,2]_{q^2}$. Then $SR(\C_1,\C_2)$ is a sum-rank code with block length $t$, matrix size $2\times 2$,
    \[d_{sr}(SR(\C_1,\C_2))\geq 4,\ {\rm and}\ {\rm dim}_{\mathbb{F}_q}(SR(\C_1,\C_2))=2(t-3+t-1)=4t-8.\]
    We have $d_{sr}(SR(\C_1,\C_2))=4$ since
    \[
    V_{sr}(q,2)\geq \frac{t(t-1)(q^2-1)^4}{2(q-1)^2}>q^8
    \]
    and the Singleton defect of $SR(\C_1,\C_2)$ is $2(2t-d_{sr}+1)-2(2t-4)=2$. Then $SR(\C_1,\C_2)$ is an almost MSRD code.
\end{proof}
\begin{remark}
    Comparing MSRD codes with the block length up to $q-1$ constructed in \cite{Neri2023,Martinez2018}, these almost MSRD codes have larger block lengths up to $q^2$. One of the challenging problem is to construct almost MSRD codes with the block lengths up to $q^2$ and larger minimum sum-rank distances.
\end{remark}

\section{Plotkin sum of sum-rank codes}

As in the case of the matrix size $1\times 1$ sum-rank codes (codes in the Hamming metric), the Plotkin sum of sum-rank codes of the matrix size $n\times m,n\leq m$ can be defined and good sum-rank codes can be obtained.

Let $\C_1$ and $\C_2$ be two sum-rank codes over ${\mathbb{F}_q^{(n,m),\dots,(n,m)}}$ with block length $t$, minimum sum-rank distances $d_1$ and $d_2$, and dimensions $k_1$ and $k_2$. Then we define their Plotkin sum-rank code as
\[{\rm Plotkin(\C_1,\C_2)}=\{({\bf c}_1|{\bf c}_1+{\bf c}_2)\,:\, {\bf c}_1\in \C_1,{\bf c}_2\in \C_2\}.\]

\begin{theorem}
    Let $\C_1,\C_2$ be two sum-rank codes given as above. Then the dimension of their Plotkin sum is $k_1+k_2$ and the minimum sum-rank distance of their Plotkin sum is ${\rm min}\{2d_1,d_2\}$.
\end{theorem}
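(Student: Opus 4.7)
The plan is to adapt the classical Plotkin-sum argument from the Hamming setting, with the rank subadditivity per block providing a triangle inequality for $wt_{sr}$.

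For the dimension, I would consider the $\mathbb{F}_q$-linear map $\varphi\colon \C_1\oplus \C_2\to \mathrm{Plotkin}(\C_1,\C_2)$ given by $\varphi(\mathbf{c}_1,\mathbf{c}_2)=(\mathbf{c}_1\,|\,\mathbf{c}_1+\mathbf{c}_2)$. It is surjective by construction. If $\varphi(\mathbf{c}_1,\mathbf{c}_2)=\mathbf{0}$, the first block forces $\mathbf{c}_1=\mathbf{0}$, and then the second block gives $\mathbf{c}_2=\mathbf{0}$, so $\varphi$ is injective. Hence $\dim_{\mathbb{F}_q} \mathrm{Plotkin}(\C_1,\C_2)=k_1+k_2$.

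For the distance, since the construction is $\mathbb{F}_q$-linear, it suffices to bound the minimum sum-rank weight of a nonzero codeword $(\mathbf{c}_1\,|\,\mathbf{c}_1+\mathbf{c}_2)$, whose sum-rank weight is
\[
wt_{sr}(\mathbf{c}_1)+wt_{sr}(\mathbf{c}_1+\mathbf{c}_2).
\]
I would split into two cases. If $\mathbf{c}_2=\mathbf{0}$, then necessarily $\mathbf{c}_1\neq\mathbf{0}$, and the weight equals $2\,wt_{sr}(\mathbf{c}_1)\geq 2d_1$. If $\mathbf{c}_2\neq\mathbf{0}$, I would use the triangle inequality for the sum-rank metric, applied blockwise from the subadditivity of the rank function, i.e.\ $\mathrm{rank}(A)+\mathrm{rank}(B)\geq \mathrm{rank}(A+B)$ on each of the $t$ blocks, to get
\[
wt_{sr}(\mathbf{c}_1)+wt_{sr}(\mathbf{c}_1+\mathbf{c}_2)\geq wt_{sr}\bigl((\mathbf{c}_1+\mathbf{c}_2)-\mathbf{c}_1\bigr)=wt_{sr}(\mathbf{c}_2)\geq d_2.
\]
Combining the two cases yields $d_{sr}(\mathrm{Plotkin}(\C_1,\C_2))\geq \min\{2d_1,d_2\}$.

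For sharpness, I would exhibit codewords achieving each term of the minimum. Choosing $\mathbf{c}_1\in\C_1$ with $wt_{sr}(\mathbf{c}_1)=d_1$ and $\mathbf{c}_2=\mathbf{0}$ produces the codeword $(\mathbf{c}_1\,|\,\mathbf{c}_1)$ of sum-rank weight exactly $2d_1$; choosing $\mathbf{c}_1=\mathbf{0}$ and $\mathbf{c}_2\in\C_2$ with $wt_{sr}(\mathbf{c}_2)=d_2$ produces $(\mathbf{0}\,|\,\mathbf{c}_2)$ of weight exactly $d_2$. This gives the matching upper bound, so equality holds. The only genuinely non-routine ingredient is the blockwise triangle inequality for $wt_{sr}$, which is immediate from rank subadditivity and is the sole place the sum-rank structure (as opposed to the Hamming structure) enters the argument.
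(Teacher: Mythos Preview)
Your proof is correct and follows essentially the same approach as the paper: the same two-case split on whether $\mathbf{c}_2=\mathbf{0}$, with the triangle inequality for $wt_{sr}$ handling the $\mathbf{c}_2\neq\mathbf{0}$ case. Your version is in fact more complete, since you supply the explicit isomorphism for the dimension claim and the sharpness argument (exhibiting codewords of weight $2d_1$ and $d_2$), both of which the paper leaves implicit.
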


\begin{proof}
    It is obvious that the dimension of ${\rm dim}_{\mathbb{F}_q}({\rm Plotkin}(\C_1,\C_2))=k_1+k_2$. If ${\bf c}_2\neq {\bf 0}$, then
    \[
    wt_{sr}({\bf c}_1|{\bf c}_1+{\bf c}_2)=wt_{sr}({\bf c}_1)+wt_{sr}({\bf c}_1+{\bf c}_2)\geq wt_{sr}({\bf c}_1+{\bf c}_2-{\bf c}_1)=wt_{sr}({\bf c}_2)\geq d_2.
    \]
    If ${\bf c}_2={\bf 0},$ then
    \[
    wt_{sr}({\bf c}_1|{\bf c_1})=2wt_{sr}({\bf c}_1)\geq 2d_1.
    \]
    The conclusion follows immediately.
\end{proof}

In Theorem \ref{distance-optimal cyclic sum-rank code with matrix ss}, we get an infinite family of distance-optimal binary
sum-rank codes $\C$ with the block length  $t=q^{sm}-1$, the matrix size $s\times s$, the dimension $s^2t-s(2m+3)$ and the minimum sum-rank distance four. Then we can obtain new distance-optimal $2$-ary sum-rank codes with the block length $2t = 2(2^{sm}-1)$.

\begin{theorem}
Let $q=2^s$, s and m be two positive integers, $t=2^{sm}-1$. Then a distance-optimal sum-rank code  can be constructed by Plotkin sum with block length $2t$, matrix size $s\times s$ and  minimum sum-rank distance four.
\end{theorem}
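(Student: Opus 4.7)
The plan is to apply the Plotkin construction to the distance-optimal binary sum-rank code produced by Theorem~\ref{distance-optimal cyclic sum-rank code with matrix ss} with $q=2$ and $\lambda=1$. That theorem supplies a code $\mathcal{C}_1\subset\mathbb{F}_2^{(s,s),\dots,(s,s)}$ of block length $t=2^{sm}-1$, matrix size $s\times s$, minimum sum-rank distance four, $\mathbb{F}_2$-dimension $s^2t-s(2m+3)$, and guarantees distance-optimality at block length $t$. I would then form $\mathcal{C}={\rm Plotkin}(\mathcal{C}_1,\mathcal{C}_1)$ and invoke the Plotkin parameter theorem proved earlier in Section~6, which immediately yields a binary sum-rank code of block length $2t=2(2^{sm}-1)$, matrix size $s\times s$, $\mathbb{F}_2$-dimension $2(s^2t-s(2m+3))$, and minimum sum-rank distance $\min\{2\cdot 4,\,4\}=4$, matching the parameters claimed in the theorem.

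To establish distance-optimality, my strategy is the sum-rank sphere-packing argument already used in Section~5. If any competitor code $\mathcal{C}'$ in the same ambient space had the same cardinality $|\mathcal{C}|=2^{2s^2 t-2s(2m+3)}$ and minimum sum-rank distance at least five, then the sphere-packing bound would give $|\mathcal{C}'|\cdot V_{sr}^{(2t)}(2,2)\leq 2^{2s^2 t}$, so it suffices to verify
\[
V_{sr}^{(2t)}(2,2)>2^{2s(2m+3)}.
\]
I would apply Lemma~\ref{volume of ball with radius 2 in the sum-rank metric} at block length $2t$, which supplies the lower bound $(2t)(2t-1)(2^s-1)^4/2$, and carry out the direct arithmetic comparison against $2^{2s(2m+3)}$.

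The main obstacle is that doubling the block length also doubles the needed codimension exponent but only scales the two-position rank-one contribution quadratically in $t$. To close this gap I would enlarge the ball-volume estimate beyond what Lemma~\ref{volume of ball with radius 2 in the sum-rank metric} alone gives, adding the weight-one term $(2t)(2^s-1)^2$ and, more importantly, the single-position rank-two contribution $(2t)\binom{s}{2}_2(2^s-1)(2^s-2)$, and then finish the inequality for all $s\geq 2,\,m\geq 1$. If even the enriched sphere-packing bound proves too weak in the asymptotics, the fallback is a structural argument that exploits the Plotkin decomposition $(\mathbf{c}_1,\mathbf{c}_1+\mathbf{c}_2)$ directly: projecting any hypothetical distance-five code of size $|\mathcal{C}_1|^2$ onto either $t$-block half in the $\mathbb{F}_2$-linear decomposition should yield subcodes contradicting the distance-optimality of $\mathcal{C}_1$ at block length $t$ already established in Theorem~\ref{distance-optimal cyclic sum-rank code with matrix ss}.
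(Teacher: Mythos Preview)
Your choice of ${\rm Plotkin}(\C_1,\C_1)$ with both factors equal to the distance-four code is the wrong pair, and the sphere-packing step cannot be salvaged. With that choice the $\mathbb{F}_2$-codimension of the Plotkin sum is $2s(2m+3)=4sm+6s$, so distance-optimality via sphere-packing would require
\[
V_{sr}^{(2t)}(2,2)>2^{4sm+6s}.
\]
But the ball volume in the ambient space of block length $2t=2(2^{sm}-1)$ is of order $2^{2sm+4s}$: the dominant two-position rank-one contribution from Lemma~\ref{volume of ball with radius 2 in the sum-rank metric} is $\tfrac{(2t)(2t-1)(2^s-1)^4}{2}\sim 2^{2sm+4s+1}$, and the extra terms you propose (weight one, single-block rank two) are of even lower order. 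The inequality therefore fails by a factor of roughly $2^{2sm+2s}$, growing with $m$, so no refinement of the volume estimate closes the gap. Your fallback projection idea does not work either: distance-optimality of $\C_1$ at block length $t$ only says that no code of size $|\C_1|$ has $d_{sr}\geq 5$ there; it gives no obstruction to a code of size $|\C_1|^2$ with $d_{sr}\geq 5$ at block length $2t$, and projecting such a code to a half need not preserve distance.

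The paper uses an \emph{asymmetric} Plotkin pair: the first factor is a large distance-two sum-rank code $SR(\C_1',\dots,\C_s')$ of dimension $s^2t-s$ (built from one $[t,t-1,2]_{2^s}$ code and $s-1$ copies of $[t,t,1]_{2^s}$), and the second factor is the distance-four code from Theorem~\ref{distance-optimal cyclic sum-rank code with matrix ss}. This gives $d_{sr}\geq\min\{2\cdot 2,4\}=4$ with codimension only $s(2m+4)=2sm+4s$, which the volume bound $V_{sr}^{(2t)}(2,2)\gtrsim 2^{2sm+4s+1}$ does handle. The point you are missing is that in ${\rm Plotkin}(\C_1,\C_2)$ the first component only needs half the target distance, so one should spend as little codimension on it as possible.
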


\begin{proof}
    Let $\C_1'$ be a trivial cyclic code with parameters $[t,t-1,2]_{2^s}$, $\C_i',i=2,\dots,s$ be trival cyclic codes with parameters $[t,t,1]_{2^s}$, and $\C_1=SR(\C_1',\C_2',\dots,\C_s')\subset {\mathbb{F}_2^{(s,s),\dots,(s,s)}}$ be a sum-rank code with block length $t$, matrix size $s\times s$.  Then we have
    \[
    d_{sr}(\C_1)\geq 2,\ {\rm and }\ {\rm dim}_{\mathbb{F}_2}(\C_1)=s^2t-s.
    \]
    Let $\C_2$ be a sum-rank code constructed in Theorem \ref{distance-optimal cyclic sum-rank code with matrix ss}. Then we can construct a sum-rank code ${\rm Plotkin}(\C_1,\C_2)$ with block length $2t$, matrix size $s\times s$,
    \[
    d_{sr}({\rm Plotkin}(\C_1,\C_2))\geq 4, \ {\rm and}\ {\rm dim}_{\mathbb{F}_2}({\rm Plotkin}(\C_1,\C_2))=2s^2t-s-s(2m+3).
    \]
    It is distance-optimal sum-rank code since
    \[
    V_{sr}(2,2)\geq \frac{2t(2t-1)(2^s-1)^4}{2(2-1)^2}=2^{2sm+4s+1}(1-\frac{1}{2^{sm}})(1-\frac{3}{2^{sm+1}})(1-\frac{1}{2^s})^4\geq 2^{2sm+4s},
    \]
    when $m$ is large. The conclusion follows directly.
\end{proof}

\section{Concluding remarks}

In this paper, we have studied covering codes, quasi-perfect codes and distance-optimal codes in the sum-rank metric. Our main contributions are summarized as follows.
\begin{itemize}
    \item  We derived improved upper bounds on the sizes, covering radii, and block length functions of sum-rank codes. These bounds extend classical results about codes in the Hamming metric to codes in the sum-rank metric.
    As applications, we established several strong Singleton-like bounds that are stronger than the Singleton-like bound for sum-rank codes when the block length is large.
    \item We  construct infinitely many new families of distance-optimal $q$-ary cyclic codes with the minimum distance four and  provided explicit constructions of distance-optimal sum-rank codes with matrix sizes $s\times s$ and $2\times 2$ with minimum sum-rank distance four by using cyclic codes in the Hamming metric.
    \item We presented several families of quasi-perfect sum-rank codes with the matrix size $2\times m$ and $2\times 2$, these quasi-perfect sum-rank codes are distance-optimal automatically, and provided a method to construct binary distance-optimal sum-rank codes by using Plotkin sum of sum-rank codes.
\end{itemize}

\end{document}